\DeclareMathOperator{\ad}{ad}
\DeclareMathOperator{\const}{const}
\DeclareMathOperator{\diag}{diag}
\DeclareMathOperator{\sech}{sech}
\DeclareMathOperator{\tr}{tr}
\DeclareMathOperator{\dn}{dn}
\DeclareMathOperator{\sn}{sn}
\DeclareMathOperator{\GL}{GL}
\DeclareMathOperator{\gl}{gl}
\newcommand{\CS}{\begin{pmatrix} \cos\chi & \sin\chi \\ \sin\chi & -\cos\chi \end{pmatrix}}
\newcommand{\CSF}{\begin{pmatrix} \cos\chi_1 & \sin\chi_1 \\ \sin\chi_1 & -\cos\chi_1 \end{pmatrix}}
\newcommand{\CSX}{\begin{pmatrix} \cos 2x & \sin 2x \\ \sin 2x & -\cos 2x \end{pmatrix}}
\theoremstyle{plain}
\newtheorem{theorem}{Theorem}
\newtheorem{lemma}{Lemma}
\newtheorem{prop}{Proposition}
\theoremstyle{definition}
\theoremstyle{remark}
\newtheorem{remark}{Remark}
\title{Spinning top in quadratic potential and matrix dressing chain}
\author{V.E. Adler}
\address{Landau Institute for Theoretical Physics, 
1A Semenov pr., 142432 Chernogolovka, Russia}
\email{adler@itp.ac.ru}
\author{A.P. Veselov}
\address{Department of Mathematical Sciences,
Loughborough University, Loughborough LE11 3TU, UK}
\email{A.P.Veselov@lboro.ac.uk}
\keywords{Rigid body dynamics, matrix Schr\"odinger operators, Darboux transformations. MSC: 37J35, 70H06, 81R12}
\begin{document}
\begin{abstract}
We show that the equations of motion of the rigid body about centre of mass in the Newtonian field with a quadratic potential are special reductions of period-one closure of the Darboux dressing chain for the Schr\"odinger operators with matrix potentials. We show that the corresponding matrix Schr\"odinger operators are maximally finite-gap (in the sense that for all sufficiently large energies all solutions of the corresponding Schr\"odinger equation are bounded) and describe their spectrum explicitly. The general $2\times 2$-matrix case of the dressing chain, providing also some exotic matrix versions of the harmonic oscillator, is discussed in more detail.  \end{abstract}

\maketitle
\rightline{\em Dedicated to V.V. Kozlov on his 75th birthday}
\bigskip

\section{Introduction}

The study of the motion of a rigid body about a fixed point is one of the most classical problems of mechanics. The equations of the free top were written by Euler and were integrated in the newly invented elliptic functions by Jacobi \cite{Jacobi}. In constant gravitational field the system in general is non-integrable with the exception of two famous Lagrange and Kovalevskaya cases \cite{Kozlov1983}. The integration of the equations of motion in the Kovalevskaya case \cite{Kov} was the climax of the theory of integrable systems in XIX century.

We will be interested in the case of the motion of the top about its centre of mass in the Newtonian field with quadratic potential. In the axial-symmetric case the integrability of the system was shown by de Brun \cite{Brun}. The corresponding system coincides with the Clebsch integrable case of motion of the free rigid body in the infinite fluid \cite{Clebsch}. A close relation of the de Brun case to the Jacobi problem of geodesics on ellipsoid was observed by Kozlov \cite{Kozlov1}.

A remarkable development in 1980s was due to Reyman \cite{Reiman} and Bogoyavlenskij \cite{Bogo84}, who independently discovered that this system turns out to be integrable also for general quadratic potential, see details in the next section.

The aim of this paper is to explain the link of this problem with the Darboux dressing chain for the matrix Schr\"odinger operators. In the scalar case such chain was first considered by Infeld and Hull \cite{IH}, applications to spectral theory were initiated by A.B. Shabat \cite{Shabat}. The periodic closures with relation to the algebro-geometric finite-gap theory and hierarchies of Painlev\'e equations were studied in detail in \cite{Adler94, VS}. 

We will show that in the matrix case already the period-one closure gives rise to an interesting system on $d\times d$ (in general, complex) matrices $F(x)$ and $B(x)$:
\begin{equation}\label{DC}
 CF'+F'C=[C,F^2+B]+2\alpha C,\quad B'=[B,F]  
\end{equation}
where $F'=dF/dx$, constant matrix $C$ and $\alpha \in \mathbb C$ are parameters.
When $\alpha=0$ we get a matrix integrable system of Dubrovin type \cite{Dubrovin, Dubrovin2} and show that its reduction describes the motion of the $d$-dimensional top in general quadratic potential. In that sense the matrix dressing chain (\ref{DC}) can be considered as natural $GL(d)$ extension of this important system on $SO(d)$. We would like to mention that the corresponding system on $SO(d)$ has multi-parameter integrable generalisations \cite{Bogo84, Reiman}, in the free case discovered by Manakov \cite{Manakov}, so the fact that it is precisely the rigid body system, which appears in this relation, seems to be remarkable.

The dressing chain interpretation makes link with the spectral theory of Schr\"odinger operators with matrix potential.\footnote{ A different link of the rigid body dynamics with matrix Schr\"odinger equation was mentioned in \cite{V2}.} In particular, the corresponding Schr\"odinger operators 
$$
L=-D_x^2+U(x), \quad U=-D_xF+F^2+B,\quad D_x=\frac{d}{dx},
$$
are finite-gap in Dubrovin's sense \cite{Dubrovin}. For the Bogoyavlenskij top, corresponding to  $\alpha=0$ and the reduction $C=C^\top, \, F=-F^\top, \, B=B^\top,$ we show that the operators $L$ are self-adjoint with the spectrum described by a suitable real version of the spectral curve of the classical mechanical system. In particular, we show that the corresponding Schr\"odinger operators are maximally finite-gap  in the sense that for sufficiently large energies all solutions of the corresponding Schr\"odinger equation are bounded (cf. \cite{GS}).

However, from the point of view of the spectral theory of matrix Schr\"odinger operator more natural is the reduction
$
 B=0,\,\,  F=F^\top,\,\, CC^\top=I,
$
leading directly to the symmetric potentials $U=U^\top.$
In the simplest non-trivial $2\times 2$-case with $\alpha=0$ we have a family of $\pi$-periodic Schr\"odinger operators with Mathieu-like matrix potentials
$$
 U=A\CSX,
$$
with the spectral problem explicitly solvable in elementary functions. This is quite interesting since in the scalar case there are no such non-constant periodic finite-gap operators.

When $\alpha\neq 0$ we have the following isospectral family of exotic harmonic oscillators with
$$
 U=\alpha^2x^2I + \beta x\CS,\quad \chi=\gamma x^2+ \delta,
$$
solvable in terms of Weber (parabolic cylinder) functions.

We show that the general solutions of equation (\ref{DC}) in $2\times 2$-case can be expressed in either elliptic functions, or certain Painlev\'e II and IV transcendents and discuss the spectral properties of the corresponding Schr\"odinger operators in Section 6.

We conclude with a brief discussion of the related matrix KdV hierarchy and corresponding matrix Novikov equations.

\section{Motion of the top in Newtonian field with quadratic potential}

We follow here Bogoyavlenskij \cite{Bogo92} to derive the equations of motion of the rigid body about its centre of mass in Newtonian field with quadratic potential. As a concrete example one can consider the oscillations of a satellite about its centre of mass \cite{Bel}.

Let us start with the problem of the rotation of a 3-dimensional rigid body $B$ around its centre of mass $O$ in the Newtonian central field of a material point $P$ having mass $m$. If the distance $R$ from $O$ to the centre of the field is much larger than the size of the body, the equations of satellite motion in the leading approximation order have the form \cite{Bel}
\begin{equation}\label{Bel}
\dot M=M\times\Omega+\gamma m R^{-3} p\times Jp, \quad \dot p=p\times\Omega.
\end{equation}
Here 
$
 J_{ij}:=\int_B \rho(x)x_ix_j\,dx
$
is the inertia tensor of the body, $\Omega$ is the angular velocity, $M=J\Omega$ is the angular momentum, $p$ is the unit vector in moving frame directed from the centre of mass of the body to the centre of the field and cross denotes vector product. A remarkable fact is that these equations coincide with the Clebsch integrable case for the Kirchhoff equations \cite{Clebsch}, which was rediscovered in this context by de Brun \cite{Brun}.

The equations should be modified for any shape $E$ of attracting body (e.g. Earth) as
\begin{equation}\label{Bog1}
\dot M=M\times \Omega+\gamma \int_E R^{-3}(y) \rho_E(y) p(y)\times Jp(y)\,dy, \quad \dot p(y)=p(y)\times\Omega,
\end{equation}
where $p(y)$ is the unit vector in moving frame directed from $y\in E$ to the centre of mass $O$ and $R(y)$ is the distance from $y$ to $O$ (see \cite{Bogo92}).

Let us identify now the Euclidean space $\mathbb R^3$ with vector product and the space of $3\times 3$ skew-symmetric matrices with the Lie bracket, and introduce the symmetric matrix $P$ with entries
\[
P_{ij}=\gamma \int_E R^{-3}(y) \rho_E(y) p_i(y)p_j(y)\,dy.
\]
Then the equations of motion take a remarkable closed matrix form \cite{Bogo92}
\begin{equation}\label{Bog2}
 \dot M=[M,\Omega]-[P, J], \quad \dot P=[P, \Omega]
\end{equation}
where $M=J\Omega+\Omega J\in so(3)$ and we used the identity
$
 [X, YX+XY]=[X^2,Y]
$
for any two matrices $X$ and $Y$.

In this form the equations can be naturally extended to dimension $d$ with $M,\Omega\in so(d)$, $P$ and (fixed) $J$ are symmetric $d\times d$ matrices \cite{A}. The corresponding equations (\ref{Bog2}) are the Euler equations on Lie algebra 
$
 \mathfrak g=so(d) \ltimes symm(d),
$
which is the semi-direct product of $so(d)$ with the space of $d\times d$ symmetric matrices considered as abelian Lie algebra, with the Hamiltonian
\[
 H=\tr\frac{1}{2} M\Omega - \tr P J.
\]
When $d=3$ and $P_{ij}=p_ip_j$ has rank 1, the system  (\ref{Bog2}) reduces to the Clebsch case of the Kirchhoff equations known to be Euler equation on the Lie algebra $E(3)=so(3) \ltimes \mathbb R^3$ (see \cite{NS}). The integrability of its $d$-dimensional version was proved by Perelomov \cite{Perelomov}, who provided the Lax representation of the Clebsch system depending on spectral parameter.

Reyman \cite{Reiman} and Bogoyavlenskij \cite{Bogo84} proved the integrability of the general $d$-dimensional system (\ref{Bog2}) in a similar way by re-writing it in the Lax form 
$\dot{\mathcal L}=[\mathcal  L, \mathcal A]$ with
\begin{equation}\label{laxbog}
 \mathcal  L=P+\mu M-\mu^2J^2, \quad \mathcal A=\Omega-\mu J,
\end{equation}
$\mu$ is spectral parameter (see also \cite{Bogo92} and more detailed analysis in \cite{RSTS}). The integrals are the coefficients of the characteristic equation of $\mathcal L$
\begin{equation}\label{laxspec}
\det (\mathcal L-\lambda I)= \det (P+\mu M-\mu^2J^2-\lambda I)=0,
\end{equation}
which determines the corresponding algebraic spectral curve. The motion is linearised on the corresponding Prym variety and the solutions can be expressed in terms of the Prym theta functions, see \cite{Bogo92}. One can extend this integration to the motion of the body frame on the group $SO(d),$ where the system is Lagrangian with the Lagrangian
\begin{equation}\label{Lagran}
\frac{1}{2}\tr J(Q^{-1}\dot Q)^2-\frac{1}{2}\tr JQ^{-1}A Q, \quad Q\in SO(d)
\end{equation}
with fixed symmetric $J, A$, see \cite{Bogo92}.

We will show now that a natural extension of this system to the group $GL(d)$ appears in the theory of matrix dressing chain as the period-one closure of the matrix dressing chain. 

\section{Matrix Darboux transformations and dressing chain}

The inverse scattering problems for matrix Schr\"odinger operators have been studied by many authors (see e.g. \cite{CD, AO, O, WK}) in relation with integrable nonlinear PDEs, starting with the pioneering paper by Lax \cite{Lax}, who introduced the matrix analogue of the celebrated Korteweg-de Vries equation. In particular, the Darboux transformations (DT) were used in \cite{GV, Gon} to describe matrix Schr\"odinger equations with trivial monodromy and some matrix KdV solitons.

Note that in contrast to the scalar case, the matrix DT are not necessarily related to the factorization of Schr\"odinger operator \cite{EGR}. Our approach is close to the treatment of DT by Suzko \cite{Suzko_2005}.

Consider the one-dimensional Schr\"odinger operator 
\[
 L=-D_x^2+U(x),\quad D_x=\frac{d}{dx},\quad x\in\mathbb R
\]
with $U(x)\in \gl_d({\mathbb C})$ being at this stage any $d\times d$ matrix-valued function.

Let $\psi=(\psi^1,\dots,\psi^d)^\top$ be any solution of the corresponding matrix Schr\"odinger equation
\begin{equation}\label{psixx}
 \psi''=(U-\lambda)\psi,\quad U=U(x).
\end{equation}
We say that the map $\tilde\psi=\psi'+F\psi$, $F\in\gl_d({\mathbb C})$ is an elementary Darboux transformation of the operator $L$ if $\tilde\psi$ satisfies another copy of Schr\"odinger equation with some new potential $\tilde U$. It is easy to check that this requirement is equivalent to relations
\begin{gather}
\label{u1}
 \tilde U=U+2F',\\
\label{uf}
 U'+F''=[U,F]+2F'F.
\end{gather}
The first one is just the definition of the new potential, while the second one defines $F$ as a solution of some ODE. In the scalar case $d=1$ this ODE is integrated to Riccati equation which is then linearized back to Schr\"odinger equation:
\[
 u=-f'+f^2+\beta,\quad \beta=\const
 \qquad\xrightarrow{~\displaystyle{f=-\phi'/\phi}~}\qquad
 \phi''=(u-\beta)\phi,
\]
so that the construction of Darboux transformation is based on a particular solution of Schr\"odinger equation at a particular value of spectral parameter $\lambda=\beta$.

In the matrix case equation (\ref{uf}) is linearizable as well, but in a bit more complicated way. It is convenient to rewrite it as a system, introducing the new variable $B$ as follows:
\begin{equation}\label{ufb}
 U=-F'+F^2+B,\quad B'=[B,F].
\end{equation}
The substitution
\begin{equation}\label{phiB}
 F=-\phi'\phi^{-1},\quad B=\phi\Lambda\phi^{-1},\quad
 \phi\in\GL_d({\mathbb C}),\quad \Lambda\in\gl_d({\mathbb C})
\end{equation}
brings this system to equations
\begin{equation}\label{phiU}
 \phi''=U\phi-\phi\Lambda,\quad \Lambda'=0,
\end{equation}
or, equivalently, as
\begin{equation}\label{phiU2}
L\phi=\phi\Lambda.
\end{equation}
The matrix $\Lambda$ can be taken in Jordan form without loss of generality due to the change $\phi\to\phi C$, $\Lambda\to C^{-1}\Lambda C$. If $\Lambda$ is diagonal matrix, $\Lambda=\diag(\beta^{(1)},\dots,\beta^{(d)})$, then construction of DT is based on $d$ particular solution of Schr\"odinger equation $L \phi^{(i)} = \beta^{(i)}\phi^{(i)}$ for the column vectors $\phi^{(i)}$ of $\phi$ corresponding to the values of spectral parameter $\lambda=\beta^{(i)}$. In particular, if $\Lambda$ is scalar matrix, $\Lambda=\beta I$, then we need $d$ linearly independent solutions of Schr\"odinger equation at $\lambda=\beta$. In the case of Jordan blocks the adjoint vectors are needed.

In the operator language, Darboux transformation is equivalent to relations
\begin{equation}\label{factorization}
 L=\bar AA+B,\quad \tilde L=A\bar A+B,\quad [A,B]=0 \quad\Rightarrow\quad AL=\tilde LA
\end{equation}
where
\[
 L=-D^2_x+U,\quad A=D_x+F,\quad \bar A=-D_x+F.
\]
In scalar case (or in the case $B=\beta I$) the condition $[A,B]=0$ is equivalent to $B=\const$, so that DT can be introduced in terms of factorization of Schr\"odinger operator $L-\beta$.

The higher order Darboux transformations are defined by intertwining operators of the form $A=D^m_x+a_1D^{m-1}_x+\dots+a_m$. The common feature of the scalar and matrix cases is that any such DT can be represented as a composition of elementary DT \cite{EGR,GV}, that is there exists a sequence of operators
\begin{equation}\label{ALLA}
  L_n=-D^2_x+U_n,\quad A_n=D_x+F_n,\quad A_nL_n=L_{n+1}A_n
\end{equation}
such that $L=L_0$ and $\tilde L=L_m$.

Another thing which should be mentioned is that, in general, the matrix Darboux transformation is not symmetric, that is operators $L$ and $\tilde L$ are not on the equal footing. Indeed, it is easy to find that the inverse of DT is given by formula $\psi=(B-\lambda I)^{-1}\bar A\tilde\psi$ and the factor $B-\lambda I$ may be dropped out only if $B$ is scalar matrix.

Finally, in the matrix case DT can be combined with the conjugation by a constant matrix $\tilde\psi=C\psi$, $\tilde U=CUC^{-1}$. 

We will see now that this trivial transformation becomes important in the study of the periodic closure of the following {\it matrix dressing chain} governing the iterations of Darboux transformations defined by sequence of operators (\ref{ALLA}). For this sequence, equations (\ref{u1}), (\ref{ufb}) take the form
\begin{equation}\label{ufbn}
 U_{n+1}=U_n+2F'_n,\quad U_n=-F'_n+F^2_n+B_n,\quad B'_n=[B_n,F_n].
\end{equation}
Elimination of potentials $U$ brings to the matrix dressing chain in the form
\begin{equation}\label{fbx}
 F'_{n+1}+F'_n=F^2_{n+1}-F^2_n+B_{n+1}-B_n,\quad B'_n=[B_n,F_n].
\end{equation}
Another convenient form is
\begin{equation}\label{vbx}
 V'_{n+1}+V'_n=(V_{n+1}-V_n)^2+B_n,\quad B'_n=[B_n,V_{n+1}-V_n]
\end{equation}
where variables $V$ are introduced by equations
\begin{equation}\label{ufv}
 U_n=2V'_n,\quad F_n=V_{n+1}-V_n.
\end{equation}

The Lax pair for the lattice (\ref{fbx}) is obtained directly from the definition of DT:
\begin{equation}\label{DLA}
 T\psi_n=(D_x+F_n)\psi_n,\quad
 D^2_x\psi_n=(U_n-\lambda)\psi_n,\quad U_n=F^2_n-F'_n+B_n
\end{equation}
where $T$ is the shift operator $n\mapsto n+1$. Elimination of the derivatives from the second equation brings this pair to an equivalent difference form:
\begin{equation}\label{TLA}
 D_x\psi_n=(T-F_n)\psi_n, \quad T^2\psi_n=(TF_n+F_nT+B_n-\lambda I)\psi_n.
\end{equation}
These linear problems can be equivalently rewritten in the matrix form for $2d$-vector $\Psi$, and the consistency condition takes form of zero curvature representation:
\begin{equation}\label{FU}
 \Psi'_n=\hat U_n\Psi_n,\quad \Psi_{n+1}=\hat F_n\Psi_n \quad\Rightarrow\quad
 \hat F'_n=\hat U_{n+1}\hat F_n-\hat F_n\hat U_n.
\end{equation}
The chain (\ref{fbx}) corresponds to the block matrices
\begin{equation}\label{UFf}
 \hat U_n=\begin{pmatrix} 0 & I\\ U_n-\lambda I & 0\end{pmatrix},\qquad
 \hat F_n=\begin{pmatrix} F_n & I\\ F^2_n+B_n-\lambda I & F_n\end{pmatrix}
\end{equation}
and the chain (\ref{vbx}) to the matrices
\begin{equation}\label{UFv}
 \hat U_n=\begin{pmatrix} V_n & I\\
    V'_n-V^2_n-\lambda I & -V_n\end{pmatrix},\qquad
 \hat F_n=\begin{pmatrix} V_{n+1} & I \\
    B_n-V_nV_{n+1}-\lambda I & -V_n\end{pmatrix}.
\end{equation}

\section{Periodic closure of matrix dressing chain and spinning top}

In the scalar case the periodic closures of the dressing chain
\[
 f'_{n+1}+f'_n=f^2_{n+1}-f^2_n+\beta_{n+1}-\beta_n,\quad f_{n+N}\equiv f_n,~ \beta_{n+N}=\beta_n
\]
were studied in detail in \cite{VS}, where it was proved, in particular, that for odd period $N=2g+1$ this is an integrable Hamiltonian system and the corresponding Schr\"odinger operators are finite-gap and all finite-gap operators can be obtained in this way. In this sense, this system is an alternative form of the stationary KdV equations from the pioneering Novikov's work \cite{Nov}. Another result
obtained in \cite{VS} is that the more general closure with $\beta_{n+N}=\beta_n+2\alpha$ leads to the hierarchy of higher analogues of Painlev\'e IV equation, describing ``finite-gap" deformations of the harmonic oscillator corresponding to $N=1$, when we have
$
 f'=\alpha,~ f=\alpha x,~ u=\alpha^2x^2-\alpha.
$

In the matrix case we consider the following periodic closure of the dressing chain (\ref{fbx}) adding the conjugation by a constant matrix $C$:
\begin{equation}\label{perN}
 F_{n+N}=CF_nC^{-1},\quad B_{n+N}=CB_nC^{-1}+2\alpha I.
\end{equation}

In particular, when $N=1$ the chain (\ref{fbx}) turns into the matrix ODE
\begin{equation}\label{fbtop}
  CF'+F'C=[C,F^2+B]+2\alpha C,\quad B'=[B,F],
\end{equation}
which is quite interesting and is the main object of our study. In fact, the chain with arbitrary period $N$ is equivalent to its $d\times d$--block reduction
\[
 F=\diag(F_1,\dots,F_N),\quad
 B=\diag(B_1,\dots,B_N),\quad
 C\mapsto\begin{pmatrix}
    0 & 1 & & \\
      & \ddots & \ddots & \\
      & & \ddots & 1\\
    C & & & 0
   \end{pmatrix}.
\]
It should be noted that in the scalar case the algebraic properties of the dressing chain, as well as the analytic properties of its solutions, depend essentially on the parity of period $N$ \cite{VS}. Indeed, if $N$ is even then the system cannot be solved with respect to derivatives and an additional constraint
\[
 f^2_N+\beta_N-f^2_{N-1}-\beta_{N-1}+\dots-f^2_1-\beta_1=0
\]
appears. In the matrix case an analogous situation may occur for any $N$, depending on the choice of matrix $C$. 
From now on we will assume that the eigenvalues $c_1,\dots c_d$ of matrix $C$ satisfy the relation
\begin{equation}\label{nondeg}
c_i+c_j\neq 0
\end{equation}
for all $i,j=1,\dots,d$, which guarantees the invertibility of the map $X \to CX+XC.$

Replacing $T\psi$ with $\mu C\psi$  in (\ref{TLA}), we can rewrite equation (\ref{fbtop}) in the form
\begin{equation}\label{PQfb}
 \mathcal L'=[\mathcal L, \mathcal A]-2\mu\alpha C
\end{equation}
where 
\begin{equation}\label{lax}
 \mathcal L=\mu^2C^2-\mu(CF+FC)-B,\quad \mathcal A=F-\mu C.
\end{equation}
In particular, for $\alpha=0$ we have the Lax representation for the corresponding system: 
\begin{equation}\label{lax2}
 \mathcal L'=[\mathcal L, \mathcal A],
\end{equation}
or, from the block form (\ref{UFf}) with 
$U_n=U, \, U_{n+1}=CUC^{-1}$ as
$
 \hat{\mathcal L}'=[\hat U, \hat {\mathcal L}],
 $
\begin{equation}\label{UFf1}
 \hat U=\begin{pmatrix} 0 & I\\ U-\lambda I & 0\end{pmatrix},\qquad
 \hat {\mathcal L}=\begin{pmatrix} C^{-1}F & C^{-1}\\ C^{-1}(F^2+B-\lambda I) & C^{-1}F\end{pmatrix}.
\end{equation}
The Lax equation can be written as the commutativity equation
\begin{equation}\label{dub1}
 [D_x-\hat U, \hat {\mathcal L}]=0,
\end{equation}
which is the matrix equation of the type studied by Dubrovin \cite{Dubrovin}.

Following Dubrovin, we call a matrix differential operator {\it finite-gap} if its vector-eigenfunction is meromorphic on a Riemann surface of finite genus.

By the general results of Dubrovin (see Lemma 4 in \cite{Dubrovin}) the relation (\ref{dub1}) implies that the
operator $D_x-\hat U$ is finite-gap with the spectral curve $\mathcal C$ given by $\det (\hat{\mathcal L}-\mu I)=0,$ or equivalently, by 
\begin{equation}\label{specC}
\det(\mathcal L+\lambda I)=0.
\end{equation}
The genus of the curve $\mathcal C$ generically is $g=(d-1)^2.$

\begin{theorem}
The matrix dressing chain (\ref{fbtop}) with $\alpha=0$  and generic $C$ is integrable in terms of $\theta$-functions of the Jacobi variety $J(\mathcal C).$ The corresponding Schr\"odinger operator $L=-D^2_x+U(x)$ is finite-gap with the spectral curve $\mathcal C.$
\end{theorem}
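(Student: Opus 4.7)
The plan separates cleanly into the two claims of the theorem, and both rest on machinery already assembled in the excerpt.

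For the finite-gap statement, I would apply Dubrovin's Lemma 4 from \cite{Dubrovin} directly to the commutator identity (\ref{dub1}). Everything needed is in place: the Lax equation (\ref{lax2}) is equivalent, via the block substitution $U_n=U,\ U_{n+1}=CUC^{-1}$, to $[D_x-\hat U,\hat{\mathcal L}]=0$ with the matrices (\ref{UFf1}). Dubrovin's lemma then asserts that $D_x-\hat U$ is finite-gap with spectral curve cut out by $\det(\hat{\mathcal L}-\mu I)=0$, which the excerpt identifies with (\ref{specC}), i.e.\ $\det(\mathcal L+\lambda I)=0$; this is exactly $\mathcal C$. To pass from $D_x-\hat U$ to the scalar Schr\"odinger operator $L=-D^2_x+U$, I would note that the block linear system $\Psi'=\hat U\Psi$ is simply the first-order form of $L\psi=\lambda\psi$; hence any vector solution of $L\psi=\lambda\psi$ meromorphic on $\mathcal C$ gives, and is given by, a meromorphic eigenvector of $\hat{\mathcal L}$. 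This identifies $L$ as finite-gap with spectral curve $\mathcal C$ in the sense of Dubrovin.

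For integrability in terms of theta functions of $J(\mathcal C)$, the isospectrality of (\ref{lax2}) shows that the coefficients of $\det(\mathcal L(\mu,x)+\lambda I)$ are $x$-independent, so $\mathcal C$ is a conserved quantity of the flow (\ref{fbtop}) with $\alpha=0$. I would then invoke the Krichever--Dubrovin algebro-geometric construction: given the smooth curve $\mathcal C$ of genus $g=(d-1)^2$ together with a generic divisor $D$ of degree $g+d-1$, one constructs a vector Baker--Akhiezer function whose $x$-evolution corresponds to a straight-line motion on the Jacobian $J(\mathcal C)$. Expanding this Baker--Akhiezer function at the points of $\mathcal C$ above $\mu=\infty$ reads off $F(x)$ and $B(x)$ as ratios of theta functions of $J(\mathcal C)$ evaluated along the linear flow, and hence the same for $U=-F'+F^2+B$.

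The main obstacle is verifying the genericity hypotheses needed for the algebro-geometric reconstruction: smoothness of $\mathcal C$ at the stated genus $(d-1)^2$, and non-speciality of the dynamical divisor on $\mathcal C$. A further subtlety is that $\mathcal L=\mu^2 C^2-\mu(CF+FC)-B$ is a quadratic pencil with non-scalar leading term $\mu^2 C^2$, so the sheet structure of $\mathcal C$ over $\mu=\infty$ must be analysed separately; the non-degeneracy assumption (\ref{nondeg}) together with distinct eigenvalues $c_1,\dots,c_d$ of $C$ controls this branching and makes the Baker--Akhiezer recipe go through. All these conditions hold on a Zariski-open subset of the parameter space of $C$, which is what is meant by ``generic $C$'' in the statement.
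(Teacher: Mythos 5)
Your proposal is correct and follows essentially the same route as the paper: the authors likewise invoke Dubrovin's Lemma~4 applied to the commutation relation (\ref{dub1}) to get the finite-gap property with spectral curve $\det(\mathcal L+\lambda I)=0$, identify $L\psi_1=\lambda\psi_1$ with the first-order block system $(D_x-\hat U)\Psi=0$, and defer the theta-function integration to \cite{Dubrovin}. You merely spell out more of what that reference supplies (the Baker--Akhiezer construction, the divisor degree, and the genericity/branching conditions at $\mu=\infty$), which the paper leaves implicit in the word ``generic.''
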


Indeed, $(D_x-\hat U)\Psi=0$ with $\Psi=(\psi_1, \psi_2)^T$ is equivalent to $L\psi_1=\lambda \psi_1,$ the details of the integration and formulae for the eigenfunctions can be found in \cite{Dubrovin} (see also more details in $2\times 2$ case in the next section).

Now we are ready to state the relation with the spinning tops.

\begin{theorem}
The matrix dressing chain (\ref{fbtop}) with $\alpha=0$ admits the real reduction with
\begin{equation}\label{redtop}
 F=-F^\top,\quad C=C^\top,\quad B=B^\top,
\end{equation}
which describes the rotation of $d$-dimensional rigid body about its centre of mass in the external field with an arbitrary quadratic potential. The motion of the body frame on the orthogonal group $SO(d)$ can be expressed explicitly in terms of the eigenfunctions of the corresponding finite-gap Schr\"odinger operator by formulas (\ref{phiB}),(\ref{phiU2}).
\end{theorem}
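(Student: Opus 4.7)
My plan is in three steps: verify that (\ref{redtop}) is an invariant reduction of (\ref{fbtop}) at $\alpha=0$, match the reduced system with the Bogoyavlenskij form (\ref{Bog2}), and read the rotation $Q\in SO(d)$ directly off the matrix eigenfunction $\phi$. For the first step, observe that if $F^\top=-F$ and $B^\top=B$, then $F^2$ is symmetric, $F^2+B$ is symmetric, and $[C,F^2+B]$ is skew-symmetric; the left-hand side $CF'+F'C$ is also skew in $F'$, so $F'$ remains skew. Similarly $([B,F])^\top = F^\top B^\top - B^\top F^\top = -FB+BF = [B,F]$, so $B'$ is symmetric and the reduction (\ref{redtop}) is preserved.

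For the second step, I would set
\[
 J:=C,\quad \Omega:=F,\quad M:=J\Omega+\Omega J = CF+FC,\quad P:=B.
\]
By the nondegeneracy assumption (\ref{nondeg}), the map $\Omega\mapsto C\Omega+\Omega C$ is a linear isomorphism of $so(d)$, so $\Omega$ is uniquely recovered from $M$ exactly as in the rigid body. A direct computation
\[
 [M,\Omega] = [CF+FC,F] = CF^2-F^2C = [C,F^2],\qquad [P,J]=[B,C]=-[C,B]
\]
yields $[M,\Omega]-[P,J]=[C,F^2+B]$, which by (\ref{fbtop}) equals $\dot M = CF'+F'C$; the second Bogoyavlenskij equation $\dot P=[P,\Omega]$ is just $B'=[B,F]$. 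Hence the reduced chain coincides with (\ref{Bog2}), with inertia tensor $J=C$ and symmetric potential matrix whose eigenvalues equal the conserved eigenvalues of $B=P$; free choice of these reproduces the arbitrary quadratic potential of the Lagrangian (\ref{Lagran}).

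For the third step I would exploit the matrix eigenfunction $\phi$ of (\ref{phiB}),(\ref{phiU2}), which satisfies $\phi'=-F\phi$ and $B=\phi\Lambda\phi^{-1}$. The crucial observation is that skew-symmetry of $F$ makes $\phi^\top\phi$ conserved,
\[
 (\phi^\top\phi)' = (\phi^\top)'\phi + \phi^\top\phi' = \phi^\top F\phi - \phi^\top F\phi = 0,
\]
so the choice $\phi(0)\in SO(d)$ forces $\phi(x)\in SO(d)$ for all $x$, while $B^\top=B$ together with $B=\phi\Lambda\phi^\top$ forces $\Lambda=\Lambda^\top$, consistent with real symmetric potential data. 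Setting $Q:=\phi^{-1}=\phi^\top$ then yields $\dot Q=-\phi^{-1}\phi'\phi^{-1}=\phi^{-1}F=Q\Omega$, so $Q^{-1}\dot Q=\Omega$ is exactly the body-frame angular velocity, as required. The main conceptual point — and the reason the preceding verification is more than formal algebra — is that the very same $\phi$ which encodes the finite-gap spectral data of $L=-D_x^2+U$ is automatically orthogonal (by virtue of $F^\top=-F$) and therefore supplies the rotation $Q\in SO(d)$ directly; this is what singles out (\ref{redtop}) as the reduction corresponding to the spinning top.
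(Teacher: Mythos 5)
Your proposal is correct and follows essentially the same route as the paper: the identification $F=\Omega$, $C=J$, $B=P$, $M=CF+FC$ together with the identity $[CF+FC,F]=[C,F^2]$ reproduces (\ref{Bog2}), and the body frame is read off the matrix solution $\phi$ of (\ref{phiB}), (\ref{phiU2}). Your two additional checks --- that the map $X\mapsto CX+XC$ preserves the symmetric/skew splitting so the reduction (\ref{redtop}) is invariant, and that $\phi^\top\phi$ is conserved so $\phi$ stays in $SO(d)$ --- are left implicit in the paper and are worthwhile to make explicit.
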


Indeed, the substitution $F=\Omega$, $J=C$, $B=P$ with $M=J\Omega+\Omega J=CF+FC$ into the matrix system (\ref{fbtop}) coincides with the equations of the motion of the top in quadratic potential (\ref{Bog2}). The Lax pair (\ref{lax}) agrees with the Lax pair (\ref{laxbog}) of Bogoyavlenskij and Reyman. To find the motion of the body frame $g(t) \in SO(d)$ we need to solve the equation $\dot g=\Omega g,$ which corresponds to the relation $\phi'=F \phi$ in (\ref{phiB}). However, according to (\ref{phiU2}), the solution can be found explicitly from the eigenfunctions of the corresponding finite-gap operator $L=-D^2+U$ (cf. \cite{Bogo92}). 

From the point of view of the spectral theory of matrix Schr\"odinger operator, the self-adjoint reduction $U=U^\top$ is more natural. In general, Darboux transformation does not preserve this reduction, and it is rather difficult to characterize the solutions of the dressing chain which lead to symmetric potentials. One class of solutions corresponds to equation (\ref{fbtop}) under the real reduction \footnote{Geometry and integrability of a different system on symmetric matrices is discussed in details in \cite{BBIMT}.}
\begin{equation}\label{symf}
 B=0,\quad F=F^\top,\quad CC^\top=I,
\end{equation}
or, the Hermitian reduction
\begin{equation}\label{symherm}
 B=0,\quad F=F^*,\quad CC^*=I
\end{equation}
with $X^*=\bar X^\top.$

However, there are other less obvious possibilities. Let us consider the chain (\ref{vbx}) instead of (\ref{fbx}). Elimination of $B$ yields
\[
 V''_{n+1}+V''_n=2V'_{n+1}(V_{n+1}-V_n)-2(V_{n+1}-V_n)V'_n
\]
and the periodicity condition $V_{n+1}=CV_nC^{-1}$ (we assume $\alpha=0$ here) brings to equation
\begin{equation}\label{vxx}
 V''C^{-1}+C^{-1}V''=2[V',[V,C^{-1}]].
\end{equation}
The solutions to this equations give rise to a special class of solutions of (\ref{fbtop}) with $\alpha=0$
given by the formulas
\begin{equation}\label{FB}
 F=CVC^{-1}-V,\,\,  B=CV'C^{-1}+V'-F^2, \quad U=2V'.
\end{equation}
The equation (\ref{vxx}) admits the reduction
\begin{equation}\label{symv}
 V=V^\top,\quad C=C^\top
\end{equation}
leading to the symmetric potential $U=2V'$. In general, this subclass of potentials is different from the one defined by (\ref{symf}). The reduction $V=V^\top$, $C=-C^\top$ is also admissible, however in this case the left hand side of equation (\ref{vxx}) becomes degenerate.

To end this section, we notice that (\ref{vxx}) is the Euler-Lagrange equation for the Lagrangian
\[
 L=\tr C^{-1}((V')^2+2VV'V).
\]
Another Lagrangian structure corresponds to the variable $\phi$ introduced by equation (\ref{phiB}). Under this change, the equation (\ref{fbtop}) at $\alpha=0$ takes the form
\[
 C\phi''\phi^{-1}+\phi''\phi^{-1}C=2(\phi'\phi^{-1})^2C-[C,\phi\Lambda\phi^{-1}]
\]
which is Euler-Lagrange equation for the Lagrangian 
\[
 L=\tr C((\phi'\phi^{-1})^2+\phi\Lambda\phi^{-1})
\]
in agreement with (\ref{Lagran}) and \cite{Bogo92}.
\section{Solutions of 1-periodic dressing chain in \texorpdfstring{$2\times2$}{2x2} matrix case}

In this section we consider the solutions of equation (\ref{fbtop}) for $2\times2$ matrix case.

\begin{theorem}
The general solution of periodic dressing chain (\ref{fbtop}) with $\alpha\neq 0$ for $2\times2$ matrices  can be expressed in terms of the Painlev\'e II and IV transcendents.

When $\alpha=0$ the general solution can be expressed in elliptic functions.
\end{theorem}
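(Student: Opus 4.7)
The plan is to reduce the $2\times 2$ matrix chain (\ref{fbtop}) to a scalar second-order ODE via a convenient gauge choice and the integrals of motion, then identify the reduced equation with the classical elliptic/Painlev\'e equations.

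By the conjugation symmetry $F,B\mapsto SFS^{-1}, SBS^{-1}$ together with $C\mapsto SCS^{-1}$ of (\ref{fbtop}), one may assume $C$ is in Jordan normal form. Under the nondegeneracy assumption (\ref{nondeg}), the generic case is $C=\diag(c_1,c_2)$ with $c_1,c_2\neq 0$ and $c_1+c_2\neq 0$. Writing $F=(f_{ij})$, $B=(b_{ij})$ and using $(CX\pm XC)_{ij}=(c_i\pm c_j)X_{ij}$, the diagonal part of (\ref{fbtop}) gives $f'_{ii}=\alpha$, whence $f_{ii}=\alpha x+a_i$, while $B'=[B,F]$ forces $\tr B=b_{11}+b_{22}$ to be constant. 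The remaining entries satisfy a closed ODE system in the five unknowns $f_{12},f_{21},b_{12},b_{21},b_{11}-b_{22}$, coupled to $f_{11}+f_{22}=2\alpha x+\const$.

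For $\alpha=0$ the reduced system is autonomous and the Lax representation (\ref{lax2}) applies. The spectral curve (\ref{specC}) has the explicit form $\det(\mathcal L+\lambda I)=\lambda^2+P(\mu)\lambda+Q(\mu)=0$ with $\deg P=2$ and $\deg Q=4$; this is generically a smooth double cover of the $\mu$-line branched at four points, hence of genus $g=(d-1)^2=1$. By Theorem~1 and Dubrovin's construction the flow linearises on the Jacobian $J(\mathcal C)$, which for a genus-one curve is the elliptic curve $\mathcal C$ itself. Therefore every entry of $F(x)$ and $B(x)$ is meromorphic on an abelian cover of $\mathbb C$ with periods prescribed by $\mathcal C$, i.e., an elliptic function of $x$.

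For $\alpha\neq 0$ the linear-in-$x$ diagonal part of $F$ makes the reduced system non-autonomous, and the modified Lax equation (\ref{PQfb}) is isomonodromic rather than isospectral — the standard origin of Painlev\'e transcendents. Using the residual constant diagonal gauge freedom together with the quantities read off from (\ref{PQfb}) at special values of $\mu$, one expresses four of the five dynamical variables in terms of a single scalar dependent variable $w(x)$ (conveniently a rescaling of $w=f_{12}f_{21}$, or of a ratio such as $b_{12}/f_{12}$). A direct computation then shows that $w$ satisfies a second-order ODE which, after translation of $x$ and rescaling of $w$, is equivalent to Painlev\'e~IV in the generic case; when one of the first integrals of (\ref{PQfb}) vanishes it degenerates to Painlev\'e~II.

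The main obstacle is this last step: pinning down the reduced ODE as a specific Painlev\'e equation requires a careful tracking of the isomonodromic Lax data, of the residual diagonal gauge, and of the algebraic relations among the first integrals. This is the matrix counterpart of the scalar reductions performed in \cite{VS}, but the presence of the auxiliary matrix $B$ and the coupling between $F$ and $B$ make the computation substantially more intricate.
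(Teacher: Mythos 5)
There is a genuine gap. The substance of the theorem is the explicit identification of the reduced scalar ODE with Painlev\'e IV and Painlev\'e II, and your proposal does not carry this out: you state yourself that ``pinning down the reduced ODE as a specific Painlev\'e equation'' remains the main obstacle. The paper's proof consists precisely of this computation. For diagonalizable $C=\diag(1+\gamma,1-\gamma)$ one passes to the variables $h=f_2f_3$, $g_2=b_2/f_2$, $g_3=b_3/f_3$, finds the first integrals $c_3=h-\gamma b_1$ and $c_4=b_1^2+g_2g_3h$, and shows that $g_2$ satisfies Painlev\'e IV after an explicit affine change of variables. Without producing the closed subsystem, its integrals, and the final second-order equation, the claim ``equivalent to Painlev\'e IV'' is an assertion, not a proof.

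Moreover, your structural picture of where Painlev\'e II comes from is wrong. You treat only $C=\diag(c_1,c_2)$ as ``the generic case'' and suggest that PII appears ``when one of the first integrals vanishes'' inside that case. In fact the case split is on the Jordan type of $C$: the diagonalizable case yields Painlev\'e IV, while the non-diagonalizable case $C=\left(\begin{smallmatrix}1&1\\0&1\end{smallmatrix}\right)$ (which is perfectly compatible with the nondegeneracy condition (\ref{nondeg}) and cannot be discarded) is where the reduction produces Painlev\'e XXXIV, i.e.\ Painlev\'e II. Your argument never touches this case, so even granting the PIV computation, half of the first assertion is unproved. By contrast, your $\alpha=0$ argument via the genus-one spectral curve and Dubrovin's linearisation (Theorem 1) is a legitimate alternative to the paper's direct integration to $(g_2')^2=P(g_2)$ with quartic $P$, but it too covers only generic diagonalizable $C$; the Jordan-block and degenerate-curve cases still require the direct reduction, which in the paper gives $(b_3')^2=P(b_3)$ with cubic $P$.
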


\begin{proof}
 Let
\[
 F=\begin{pmatrix} f_1 & f_2\\ f_3 & f_4 \end{pmatrix}, \quad
 B=\begin{pmatrix} b_1 & b_2\\ b_3 & -b_1\end{pmatrix}.
\]

First note that when $C=cI$ is a scalar matrix with $c\neq 0$, the equation (\ref{fbtop}) becomes 
$F'=\alpha I,$ so $F=\alpha x I + F_0.$
This leads to the reducible potentials with
$$
U= \begin{pmatrix} \alpha^2 (x-a_1)^2+b_1 & 0\\ 0 & \alpha^2 (x-a_2)^2+b_2 \end{pmatrix},
$$
so the problem is reduced to the standard case of the harmonic oscillator.

We will assume therefore, without loss of generality, that matrix $C$ has one of two forms:
\[
 \text{(a)}\quad
 C=\begin{pmatrix}1+\gamma&0\\ 0&1-\gamma\end{pmatrix},~~ \gamma\ne0,\pm1
 \qquad\text{or}\qquad
 \text{(b)}\quad
 C=\begin{pmatrix}1&1\\0&1\end{pmatrix}.
\]
In the case (a) the relations $\tr CF'=\alpha\tr C$, $\tr F'=2\alpha$ imply $f_1+f_4=2(\alpha x+c_1)$, $f_1-f_4=2c_2$ where $c_1$ and $c_2$ are integration constants and we come to the system
\begin{gather*}
 f'_2=2\gamma(\alpha x+c_1)f_2+\gamma b_2,\qquad
 f'_3=-2\gamma(\alpha x+c_1)f_3-\gamma b_3,\\
 b'_1=f_3b_2-f_2b_3,\qquad
 b'_2=2f_2b_1-2c_2b_2,\qquad
 b'_3=-2f_3b_1+2c_2b_3.
\end{gather*}
The substitutions
\[
 h=f_2f_3,\quad g_2=b_2/f_2,\quad g_3=b_3/f_3
\]
bring this to a subsystem for $g_2,g_3,b_1,h$:
\begin{equation}\label{sys1}
 g'_2= -2Xg_2-\gamma g^2_2+2b_1,\quad 
 g'_3= 2Xg_3+\gamma g^2_3-2b_1,\quad 
 b'_1=(g_2-g_3)h,\quad h'=\gamma(g_2-g_3)h
\end{equation}
where $X=\alpha\gamma x+\gamma c_1+c_2$, plus one quadrature for the function $f_2$. For $\alpha\ne0$, system (\ref*{sys1}) is equivalent to the Painlev\'e IV equation
\[
 y''=\frac{(y')^2}{2y}+\frac{3}{2}y^3+4zy^2+2\Bigl(z^2+\frac{c_3}{\alpha\gamma}-1\Bigr)y
 -\frac{2c_4}{\alpha^2y}
\]
for $y(z)=(\gamma/\alpha)^{1/2}g_2(x)$ and $z=(\alpha\gamma)^{-1/2}X$, with parameters defined by the values of the first integrals
\[
 c_3=h-\gamma b_1,\quad c_4=b^2_1+g_2g_3h.
\]

For $\alpha\ne0$, these first integrals bring (\ref*{sys1}) to an equation for $g_2(x)$ of the form $2g_2g''_2-(g'_2)^2=k_4g^4_2+k_3g^3_2+k_2g^2_2+k_0$ where $k_i$ are constants expressed through $\gamma$ and $c_i$ (this can be viewed as an autonomous version of P-IV). Note the absence of the first power of $g_2$ in the right hand side. Thanks to this, we have  $2g'''_2=(4k_4g^2_2+3k_3g_2+2k_2)g'_2$, by differentiating and canceling $g_2$, and this is easily integrated to $(g'_2)^2=P(g_2)$ where $P$ is a fourth degree polynomial, which means that $g_2$ is an elliptic function or its degeneration.

In the case (b) we find $f_1=X+g(x)$, $f_4=X-g(x)$ and $f_3=c_2$ where $X=\alpha x+c_1$, $c_1$ and $c_2$ are constants, then equation (\ref{fbtop}) is equivalent to the system
\begin{equation}\label{sys2}
\begin{gathered}
 f'_2=-2Xg-b_1,\qquad g'=c_2X+\frac{b_3}{2},\\
 b'_1=c_2b_2-f_2b_3,\qquad b'_2=2f_2b_1-2gb_2,\qquad b'_3=2gb_3-2c_2b_1.
\end{gathered}
\end{equation}
The first integrals
\[
 c_3=b_2b_3+b^2_1,\quad c_4=2c_2f_2-b_3+2g^2,\quad c_5=c_2X^2+c_2b_2+(X+f_2)b_3+2(b_1-\alpha)g
\]
allow us to reduce (\ref*{sys2}) to Painlev\'e XXXIV equation (which is a version of Painlev\'e II equation, see \cite[p.\,340]{Ince})
\begin{equation}\label{Pain}
 y''=\frac{(y')^2-1}{2y}+2\frac{\sqrt{c_3}}{\alpha}y^2-zy
\end{equation}
for $y(z)=\alpha^{1/3}(2c_2)^{-2/3}c^{-1/2}_3b_3(x)$ and $z=-(2\alpha c_2)^{-2/3}(2c_2X+c_4)$. 
In order to derive this equation we use only first integrals $c_3$ and $c_4$, solving them and the last equation of (\ref{sys2}) with respect to $b_1$, $b_2$ and $f_2$. Substituting this into the equation for $b''_3$ obtained from (\ref{sys2}) we find that the variable $g$ cancels identically and the equation for $b_3$ can be reduced to the form (\ref{Pain}) by the above linear transformations. The remaining first integral $c_5$ is needed only in order to express $g$ as a rational function of $b_3$, $b'_3$ and $X$; after this $b_1$, $b_2$ and $f_2$ are also expressed in this way.

When $\alpha=0$ this procedure leads to an autonomous equation of the form $2b_3b''_3-(b'_3)^2=k_3b^3_3+k_2b^2_3+k_0$ of the same type as in the previous case, and its integration brings to an equation $(b'_3)^2=P(b_3)$ with a third degree polynomial $P$. (More details on the case $\alpha=0$ are given in the examples 2 and 3 below.)

\end{proof}

\section{Spectral properties of the matrix Schr\"odinger operators}

The spectral theory of the Schr\"odinger operators 
\begin{equation}\label{schper}
L=-D_x^2+U(x), \,\, U(x)=U(x+T)=U^\top(x)
\end{equation}
 with periodic symmetric matrix potential $U(x)$ goes back to the work of Lyapunov \cite{Lyapunov} and Krein \cite{Krein} in stability theory.

The spectrum of a matrix Schr\"odinger operator can be defined as the set of $\lambda \in \mathbb C$ such that the corresponding Schr\"odinger equation $L\psi=\lambda \psi$ has a bounded solution \cite{Kuchment}. For the operators (\ref{schper})  the spectrum is purely absolutely continuous and consists of (in general, infinitely many) bands with different even multiplicities from 0 (spectral gaps) up to maximal $2d$.

We call the Schr\"odinger operator (\ref{schper}) {\it maximally finite-gap} if for all sufficiently large positive $\lambda$ all solutions of the corresponding Schr\"odinger equation $L\psi=\lambda\psi$ are bounded (cf. \cite{GS}). 
In the reducible case with $U(x)=diag (u_1(x), \dots, u_d(x))$ this happens if and only if all $u_1(x), \dots, u_d(x)$ are the finite-gap potentials in the usual sense \cite{Nov}.

We also call the operator (\ref{schper}) {\it maximally $n$-band} if its absolutely continuous spectrum consists of $n$ spectral bands (including infinite one) with maximal multiplicity $2d.$
As in the usual case, we will extend this notion to the case of quasi-periodic matrix potentials.

In the periodic case we have the Bloch-Floquet multipliers $\tau_i(\lambda)$, $i=1,\dots,2d$ defined as the eigenvalues of the corresponding monodromy matrix $M(\lambda)$ on the space of solutions $L\psi=\lambda \psi$, satisfying characteristic equation
$$
\det (M(\lambda)-\tau I)=0.
$$
It is known that if $\tau(\lambda)$ is a multiplier, then $\tau^{-1}(\lambda)$ is a multiplier too. For real $\lambda$ the same is true for $\overline{\tau(\lambda)}.$ The spectrum of $L$ has the band structure with the intervals determined by the condition $|\tau_i(\lambda)|=1$. 
The ends of the bands correspond to the collision of multipliers which
may happen either at $\pm 1$ (for periodic or antiperiodic
eigenfunctions), or somewhere on the unit circle (for the so-called {\it
resonances}), see more details in \cite{BBK, ChK}.


Let $M(t), P(t)$ be a solution (in general, quasi-periodic) of the Bogoyavlenskij system of the rigid body motion in Newtonian field with quadratic potential
  \begin{equation}\label{motion}
   \dot M=[M,\Omega]-[P, J], \quad \dot P=[P, \Omega],
\end{equation}   
where $M=J\Omega+\Omega J, \,\, \Omega^\top=-\Omega, \, \, P^\top=P$ and $J=J^\top$ is given symmetric matrix assumed to be diagonal: $J=diag (J_1,\dots, J_d), J_1\geq J_2\geq \dots \geq J_d>0.$ 

 Replace here $t$ by $x$ and consider the matrix Schr\"odinger operator
  \begin{equation}\label{schbog}
L=-D_x^2+U(x), \,\, U=\Omega^2-\Omega'+P.
\end{equation} 

\begin{theorem}
The matrix Schr\"odinger operator (\ref{schbog}) is self-adjoint maximally finite-gap operator with the spectral curve $\Gamma$ given by
\begin{equation}\label{spectop}
 \det(\mu^2J^2-\mu M-P+\lambda I)=0
\end{equation}
with $\bar\mu=-\mu, \, \bar\lambda =\lambda$, which is a real version of the classical spectral curve (\ref{laxspec}).
\end{theorem}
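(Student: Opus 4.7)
The plan is to deduce everything from Theorem 1 applied to the reduction (\ref{redtop}), combined with an analysis of the real structure and large-$\lambda$ asymptotics of the spectral curve.

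Under the dictionary $C\mapsto J$, $F\mapsto \Omega$, $B\mapsto P$, the identity $M=J\Omega+\Omega J=CF+FC$ turns the Lax matrix $\mathcal L=\mu^2C^2-\mu(CF+FC)-B$ of the dressing chain into the Bogoyavlenskij Lax matrix $\mathcal L=\mu^2J^2-\mu M-P$ from (\ref{laxbog}). Theorem 1 then identifies the spectral curve of $L$ with $\det(\mathcal L+\lambda I)=0$, which is precisely (\ref{spectop}). For the real structure I transpose this equation and use $J^\top=J$, $M^\top=-M$, $P^\top=P$ to see that $(\mu,\lambda)\mapsto(-\mu,\lambda)$ preserves the curve; combined with complex conjugation (all matrices being real) this produces the anti-holomorphic involution $(\mu,\lambda)\mapsto(-\bar\mu,\bar\lambda)$, whose fixed locus is the real form $\{\bar\mu=-\mu,\ \bar\lambda=\lambda\}$ advertised in the statement. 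The self-adjointness of $L$ is encoded in this real structure on the spectral data, together with the symmetries that define the reduction.

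For the maximally finite-gap property I analyse the $\mu$-branches of (\ref{spectop}) as $\lambda\to+\infty$. Since the equation has degree $2d$ in $\mu$, its leading term $\det(\mu^2J^2+\lambda I)=0$ yields the $2d$ asymptotes $\mu_k\sim\pm i\sqrt{\lambda}/J_k$, all purely imaginary. The lower-order perturbation $-\mu M-P$ shifts these roots continuously in $\lambda$, but the anti-involution $\mu\mapsto-\bar\mu$ forces them to remain on the imaginary axis for all sufficiently large $\lambda$: a simple fixed point of an antiholomorphic involution cannot leave its fixed locus under a small continuous deformation. Each such purely imaginary $\mu$ gives rise via the Baker--Akhiezer construction of \cite{Dubrovin} to a Bloch-type eigenfunction of $L$ whose multiplier has unit modulus, producing $2d$ linearly independent bounded solutions of $L\psi=\lambda\psi$ for $\lambda\gg 0$, which is the maximally finite-gap property in the sense of \cite{GS}.

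The main obstacle I expect is the passage from the asymptotics above to a precise threshold $\lambda_0$ beyond which \emph{all} $2d$ branches are guaranteed to be purely imaginary: one must rule out complex-conjugate collisions of branches at intermediate energies, which can be handled by studying the discriminant of (\ref{spectop}) in $\mu$ and using the real involution to show such collisions are confined to the imaginary axis once $\lambda$ is large. A secondary subtlety is that $U-U^\top=-2\Omega'$ does not vanish pointwise, so $L$ is not symmetric on the standard $L^2(\mathbb R,\mathbb C^d)$; the natural framework is the spectrum-as-bounded-solutions definition of \cite{Kuchment}, and the real involution on the curve is exactly what secures reality of this spectrum for the Bogoyavlenskij reduction.
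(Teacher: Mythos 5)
Your identification of the spectral curve and the large-$\lambda$ asymptotics $\mu_k\sim\pm i\sqrt{\lambda}\,J_k^{-1}$ agree with the paper, but the core of the paper's proof is missing from your argument. The paper does not work with an anti-involution on the curve plus the Baker--Akhiezer construction; instead it uses the fact (a lemma extracted from the period-one closure, cf.\ (\ref{factorization}) and (\ref{Nov2})) that $L$ commutes with the \emph{first-order} operator $A=J^{-1}(D_x+\Omega)$. A joint eigenfunction with $L\psi=\lambda\psi$, $A\psi=\mu\psi$ then satisfies the first-order system $\psi'=(\mu J-\Omega)\psi$, and since $J=J^\top>0$ and $\Omega^\top=-\Omega$ one gets
\begin{equation*}
\frac{d}{dx}|\psi|^2=(\mu+\bar\mu)\,\psi^\dagger J\psi\ \geq\ (\mu+\bar\mu)\,J_d\,|\psi|^2 .
\end{equation*}
This single identity proves \emph{both} directions of the equivalence ``bounded solution $\Leftrightarrow$ $\mu+\bar\mu=0$'': if $\mathrm{Re}\,\mu\neq0$ then $|\psi|^2$ is strictly monotone and unbounded in one direction, and if $\mathrm{Re}\,\mu=0$ then $|\psi|$ is constant, hence bounded. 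Your argument only establishes that the locus $\{\bar\mu=-\mu,\ \bar\lambda=\lambda\}$ is preserved by an anti-holomorphic involution of the curve; it does not show that the spectrum (in the bounded-solution sense) is exactly the projection of that locus. The assertion that ``each purely imaginary $\mu$ gives a Bloch eigenfunction whose multiplier has unit modulus'' is precisely the point at issue: the modulus of the multiplier is controlled by the growth rate of $\psi$, which is what the displayed identity computes, and Dubrovin's construction by itself does not deliver it (the reality/boundedness of Baker--Akhiezer functions is a separate, delicate question). With the identity in hand, your worry about branch collisions at intermediate energies also evaporates, since the boundedness criterion is applied pointwise in $\lambda$ rather than by continuation from $\lambda=+\infty$.

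A second gap is self-adjointness. You correctly observe that $U-U^\top=-2\Omega'\neq0$, but then leave the claim to be ``encoded in the real structure''. The paper proves it by conjugating with $S=J^{1/2}$ and checking, \emph{using the equations of motion} (\ref{motion}) and the identity $[M,\Omega]=[J,\Omega^2]$, that $\tilde U=SUS^{-1}$ is symmetric; i.e.\ $L$ is self-adjoint with respect to the inner product defined by $J$. This is a computation that genuinely uses the dynamics, not just the algebraic symmetries $J^\top=J$, $M^\top=-M$, $P^\top=P$, and it cannot be skipped.
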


\begin{proof}

Although the corresponding potential $U=\Omega^2-\Omega'+P$ in general is not symmetric, it is in fact formally self-adjoint with respect to the scalar product defined by $J.$
Indeed, if we define the real symmetric matrix $S$ such that $S^2=J,$ then the corresponding potential $\tilde U=SUS^{-1}$ is symmetric:
$$
\tilde U^T-\tilde U=[S^{-1}(\Omega^2+\Omega'+P)S-S(\Omega^2-\Omega'+P)S^{-1}]=S^{-1}(M'-[J,\Omega^2+P])S=0$$
due to (\ref{motion}) and identity $[M, \Omega]=[J, \Omega^2].$ 

From the relation with the dressing chain (see Theorem 2) we have the following important corollary (see also formula (\ref{Nov2}) in the last section).

\begin{lemma}
The Schr\"odinger operator (\ref{schbog}) commutes with the operator $A=J^{-1}(D_x+\Omega)$.
\end{lemma}

Indeed, we know that the Darboux transformed operator $\tilde L=CLC^{-1}, C=J$
satisfies the relation (\ref{factorization}) $$AL-\tilde LA=0, \, A=D_x+F, F=\Omega,$$ so we have
$AL-CLC^{-1}A=C[C^{-1}A, L]=0$ and thus $[L, C^{-1}A]=0.$

The joint eigenfunction $\psi$ of $L$ and $C^{-1}A$ satisfies $L\psi =\lambda \psi, \,\, C^{-1}A \psi =\mu \psi$, and thus
$\psi'=(\mu J-\Omega)\psi.$
Since $J=J^\top$ and $\Omega^\top=-\Omega$ we see that
\begin{equation}\label{egfnorm}
\frac{d}{dx}|\psi|^2=(\mu+\bar\mu) \psi^\dagger J \psi=(\mu+\bar\mu)\sum_{k=1}^d J_k |\psi_k|^2.
\end{equation}
Since $\sum_{k=1}^d J_k |\psi_k|^2\geq J_d |\psi |^2$ this implies that the corresponding eigenfunction is bounded on the whole $x$-axis if and only if $\mu+\bar\mu=0$, which leads to the real version of the classical spectral curve (\ref{spectop}) with $\bar\mu=-\mu, \, \bar \lambda =\lambda.$

The spectrum of $L$ is the projection of $\Gamma$ on the $\lambda$-axis and consists of the spectral bands with different multiplicities given by the number of pre-images, which is the number of purely imaginary roots of equation (\ref{spectop}) with given real $\lambda.$ Since for large positive $\lambda$ we have $2d$ purely imaginary roots $\mu_k \sim \pm i \sqrt{\lambda} J_k^{-1}$, the operator $L$ is maximally finite-gap. More precisely, if $\lambda_1\geq \lambda_2\geq \dots \geq \lambda_d$ be the eigenvalues of $P$, then all $\lambda>\lambda_1$ belong to the continuous spectrum of $L$ with maximal multiplicity $2d$ (see Fig. 1 and 2 below). 
\end{proof}

 We will discuss now the  $2\times 2$ case of the matrix dressing chain (\ref{fbtop}) in more detail.

\section{Explicit examples in $2\times 2$-matrix case}

\subsection*{Example 1. Two-dimensional top in quadratic potential.}~

\medskip
In two dimensions one can think of a non-symmetric plate rotating/librating in a plane about its centre of mass in the Newtonian field with quadratic potential. 

As we have seen, the equation of motion (\ref{Bog2}) is the reduction (\ref{redtop}) of the dressing chain (\ref{fbtop}) with $\alpha=0$, which in $2\times 2$-case gives
\[
 F=\Omega=\begin{pmatrix} 0 & \omega \\- \omega & 0 \end{pmatrix}, \quad
 B=P=\begin{pmatrix} u & v\\ v & -u\end{pmatrix}, \quad
 C=J=\begin{pmatrix} J_1&0\\ 0& J_2\end{pmatrix},
\]
where we use the shift $B \to B+cI$ to assume that the trace of $B$ is zero.
This leads to the system
\begin{equation}\label{2dtop}
\dot u=-2\omega v, \quad \dot v=2\omega u, \quad \dot \omega=\beta v,
\end{equation}
where $\beta=(J_1-J_2)/(J_1+J_2)$.
 This is the Hamiltonian system with Poisson bracket 
 \begin{equation}\label{PB}
 \{\omega,u\}=v, \quad  \{\omega,v\}=-uv, \quad  \{u,v\}=0,
 \end{equation}
having Casimir function $I=u^2+v^2:=R^2$ and the Hamiltonian $H=\omega^2+\beta u$, so the system is equivalent to the mathematical pendulum with
$$
\mathcal H=p^2+\beta R \cos q, \quad p=\omega, \, u=R\cos q, \,  v=R \sin q.
$$ 
We have \begin{equation}\label{Jac}
\dot\omega^2=\beta^2 v^2=\beta^2 (R^2-u^2)=\beta^2 R^2 - \beta^2 u^2 =\beta^2 R^2-(H_0-\omega^2)^2,
\end{equation}
where $H_0=\omega_0^2+\beta u_0>-|\beta|R$ is determined by the initial data. 
This implies that 
\begin{equation}\label{omega}
\omega(t)= A\dn(At-t_0, k),\quad A^2=|\beta|R+H_0,\quad k^2=\frac{2|\beta|R}{|\beta|R+H_0},
\end{equation} 
where $\dn(z,k)$ is the classical Jacobi elliptic function, satisfying the differential equation $(\dn')^2=(1-\dn^2)(\dn^2-k'^2), \,\, k'^2=1-k^2$ (see e.g. \cite{WW}).
\medskip

The corresponding Schr\"odinger operator has matrix potential 
\begin{equation}\label{Utop}
U=\begin{pmatrix}u-\omega^2 & v-\dot \omega\\ v+\dot \omega & -u-\omega^2 \end{pmatrix}= \begin{pmatrix}(1+\beta)u& (1-\beta)v\\ (1+\beta)v& -(1-\beta)u \end{pmatrix}-H_0 I.
\end{equation} 
As we have already mentioned after conjugation by $S=\sqrt{J}$ the potential becomes symmetric:
\begin{equation}\label{Utoptil}
\tilde U=S U S^{-1}= \begin{pmatrix}(1+\beta)u& \gamma v\\ \gamma v& -(1-\beta)u \end{pmatrix}-H_0 I, \quad \gamma=\sqrt{1-\beta^2}=\frac{2\sqrt{J_1J_2}}{J_1+J_2}.
\end{equation} 
\begin{prop}
The matrix Schr\"odinger operators with potentials (\ref{Utop}), (\ref{Utoptil}) are maximally $n$-band with $n\leq 2$ and the eigenfunctions expressible in elliptic functions.
\end{prop}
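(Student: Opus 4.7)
The plan is to read off the spectrum from an explicit spectral curve in the $2\times 2$ case, and then to integrate the first-order Lax equation from the Lemma to obtain the eigenfunctions.

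First I would invoke Theorem~3: the operator $L$ with potential (\ref{Utop}) is self-adjoint in the $J$-scalar product and maximally finite-gap, with real spectral curve $\Gamma$ cut out by (\ref{spectop}). Plugging in $J=\diag(J_1,J_2)$, $M=J\Omega+\Omega J$, and $P$, and using the integrals $H_0=\omega^2+\beta u$ and $R^2=u^2+v^2$ of (\ref{2dtop}), a direct expansion of $\det(\mu^2J^2-\mu M-P+\lambda I)$ gives a polynomial even in $\mu$,
\[
J_1^2J_2^2\,\mu^4+\bigl[(J_1^2+J_2^2)\lambda+(J_1+J_2)^2H_0\bigr]\mu^2+\lambda^2-R^2=0,
\]
the vanishing of odd-$\mu$ terms following by transposition since $J,P$ are symmetric and $M$ is skew. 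Setting $s=\mu^2$, the quotient curve $C'=\Gamma/\{\mu\to-\mu\}$ has bidegree $(2,2)$ in $\mathbb{P}^1\times\mathbb{P}^1$, hence is elliptic.

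For the band count, Theorem~3 identifies the spectrum with the real $\lambda$'s for which $\mu\in i\mathbb R$, i.e.\ $s\le 0$; maximal multiplicity $2d=4$ is attained exactly where \emph{both} roots of the $s$-quadratic are real and non-positive. By Vieta this is the joint locus
\[
D(\lambda)\ge 0,\qquad T(\lambda):=(J_1^2+J_2^2)\lambda+(J_1+J_2)^2H_0\ge 0,\qquad \lambda^2-R^2\ge 0,
\]
with $D$ the discriminant of the $s$-quadratic. Each inequality is polynomial of degree $\le 2$, so the joint solution set is a union of at most two intervals; one contains the semi-infinite band $\lambda>\lambda_1$ from Theorem~3, so $n\le 2$. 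An independent argument is Harnack's bound applied to the genus-$1$ curve $C'$: at most two ovals, each projecting to a single $\lambda$-interval.

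For the eigenfunctions, the Lemma provides the commuting operator $A=J^{-1}(D_x+\Omega)$, so a joint eigenfunction of $(L,A)$ satisfies the $2\times 2$ first-order system $\psi'=(\mu J-\Omega)\psi$ with $(\mu,\lambda)$ constrained to $\Gamma$. Using $\omega(x)=A\,\dn(Ax-x_0,k)$ from (\ref{omega}) and the accompanying quadratures for $u,v$ from (\ref{2dtop}), the coefficients are rational in $\dn$ and $\sn$, and the system integrates to the standard genus-$1$ Baker--Akhiezer function on $\Gamma$: an exponential quasi-momentum factor times an elliptic function of $x$. The symmetric potential $\tilde U=SUS^{-1}$ with $S=\sqrt J$ has the same spectrum, with eigenfunctions $\tilde\psi=S\psi$. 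The main obstacle is the case analysis in the polynomial-inequality count, where degenerate values of $H_0,R$ can create spurious components; the Harnack route is cleaner and robust against such degenerations.
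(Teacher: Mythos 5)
Your setup coincides with the paper's: you derive the same spectral curve (the paper writes it in the variables $\rho=i\mu\sqrt{J_1J_2}$, $\delta=J_2/J_1$ as (\ref{Spectop})), and you use the same criterion from Theorem~3 that a point of the curve contributes to the spectrum iff $\mu$ is purely imaginary, so that maximal multiplicity $4$ means both roots of the quadratic in $s=\mu^2$ are real and non-positive. The eigenfunction claim via the commuting first-order operator $\psi'=(\mu J-\Omega)\psi$ with elliptic coefficients is also consistent with the paper (Theorems~1 and~3).

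The gap is in the band count, which is the actual content of the proposition. Your first argument --- ``each inequality is polynomial of degree $\le 2$, so the joint solution set is a union of at most two intervals'' --- is not a valid principle: $\{D\ge0\}$ has two components, $\{\lambda^2\ge R^2\}$ has two, and $\{T\ge0\}$ has one, and the intersection of such sets can a priori have three components (e.g.\ $\bigl((-\infty,0]\cup[1,\infty)\bigr)\cap\bigl((-\infty,2]\cup[3,\infty)\bigr)$ has three). The subsequent remark that one component contains the semi-infinite band does not repair this. Your fallback via Harnack is also not airtight: Harnack bounds the number of real ovals of the genus-one curve by two, but the maximal-multiplicity bands are the set of $\lambda$ over which the real fibre has \emph{four} points, not the projection of the ovals; a single oval of a curve of degree $4$ in $\rho$ can fold so that its fibre count alternates between $2$ and $4$ over several $\lambda$-intervals, so ``at most two ovals'' does not by itself give ``at most two multiplicity-$4$ bands.'' What is actually needed --- and what the paper supplies --- is the explicit root-location analysis of the quadratic in $\rho^2$: computing the discriminant $4h^2-4(\delta^{-1}-\delta)^2R^2$, identifying the critical values $h_\pm=\pm(\delta^{-1}-\delta)R$ and the threshold $h_*=(\delta+\delta^{-1})R$, and writing down the bands in the two cases ($\lambda\ge R$ alone of multiplicity $4$ for $h_-\le h<h_*$; $[\lambda_0,-R]$ and $[R,\infty)$ for $h>h_*$). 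You should either carry out this case analysis or find a genuinely structural replacement; neither of your two shortcuts closes the argument as stated.
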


\begin{proof}
Indeed, the spectral curve $\det (\mathcal L+\lambda I)=0$ with 
$$
\mathcal L=\mu^2C^2-\mu(CF+FC)-B= \begin{pmatrix} \mu^2J_1^2-u & -\mu(J_1+J_2)\omega-v\\ \mu(J_1+J_2)\omega-v & \mu^2 J_2^2+u \end{pmatrix}
$$
has the form
 $$
 \mu^4J_1^2J_2^2+\mu^2(J_1+J_2)^2[H_0+\frac{J_1^2+J_2^2}{(J_1+J_2)^2}\lambda]+\lambda^2-R^2=0.
 $$
Introducing $\rho:=i\mu \sqrt{J_1J_2}$ and $\delta=\frac{J_2}{J_1},$ we have the equation of the spectral curve
\begin{equation}\label{Spectop}
\rho^4-\rho^2\left[h+(\delta^{-1}+\delta)\lambda\right]+\lambda^2-R^2=0,
\end{equation} 
where $h:=\frac{(1+\delta)^2}{\delta} H_0.$ The discriminant of this quadratic (in $z=\rho^2$) equation equals $4h^2-4(\delta^{-1}-\delta)^2R^2$, so
$
h_\pm= \pm(\delta^{-1}-\delta)R
$
are the critical values, corresponding to the critical values of the energy
$$
H_\pm= \pm\frac{J_1-J_2}{J_1+J_2} R =\pm\beta R.
$$
The values $h>h_+$ correspond to the rotations of pendulum, while for values $h_{-}<h<h_+$ we have the librations.

Note that for the energy $H_0\neq H_+$ the solutions of the system (\ref{2dtop}) are periodic, so the corresponding Schr\"odinger operator 
$$
L=-D_x^2+\begin{pmatrix}(1+\beta)u& \gamma v\\ \gamma v& -(1-\beta)u \end{pmatrix}-H_0 I
$$
has real symmetric periodic matrix potential.

The analysis of the spectral curves (\ref{Spectop}) (which are real parts of elliptic curves) shows that we have maximally 1-band and 2-band cases respectively, depending on whether $h_- \leq h<  h_*$ (case A) and $h>h_*$ (case B), where
$
h_*:=(\delta+\delta^{-1})R.
$

In case A $(-\beta R\leq H_0<\frac{1+\beta^2}{2}R)$ the spectral bands are 
 $-R\leq\lambda\leq R\, (\operatorname{multiplicity} 2)$ and $\lambda\geq R \,(\operatorname{multiplicity} 4),$
in case B $(H_0>\frac{1+\beta^2}{2}R)$ we have 
$ -R\leq \lambda \leq R\, (\operatorname{multiplicity} 2)$ and $ \lambda_0\leq \lambda\leq -R, \,\,\, \lambda \geq R \, (\operatorname{multiplicity} 4),$
where $\lambda_0$ is the largest root of the quadratic equation
$$
(\delta-\delta^{-1})^2\lambda^2+2h(\delta+\delta^{-1})\lambda+h^2+4R^2=0.
$$
For example, when $\delta=\frac{1}{2}, \, R=10$, we have $h_\pm=\pm \frac{3}{2}R=\pm 15, \,\,\,\,\, h_*=\frac{5}{2}R=25$
and two types of the spectral curves shown on Fig. \ref{fig:spec2dtop}. The corresponding spectral bands with multiplicities 4 and 2 are indicated on $\lambda$-axis in black and grey respectively.

Note that in the second (maximally 2-band) case the end of the first spectral band is not a multiplicity 1 periodic/anti-periodic level, but a multiplicity 2 resonance level, which is a pure matrix phenomenon (see more on this in \cite{BBK,ChK}). 
\end{proof}

\begin{figure}[!ht]
\includegraphics[width=0.35\textwidth]{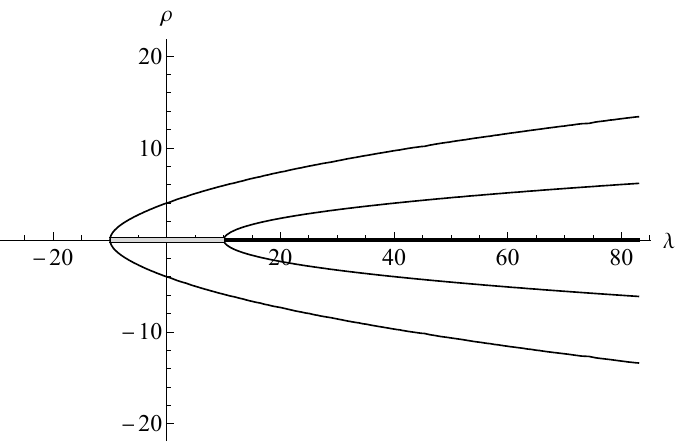}  \quad \includegraphics[width=0.35\textwidth]{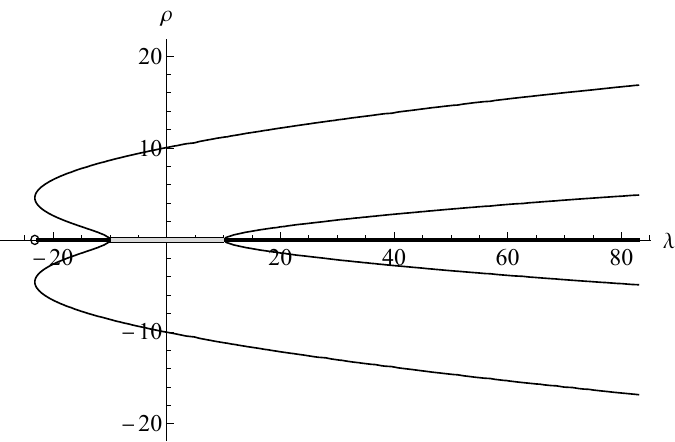}
\caption{Spectral curves (\ref{Spectop}) of with $\delta=1/2$, $R=10$ and $h=10$ (left) and $h=100$ (right).}\label{fig:spec2dtop}
\end{figure}

Let us comment briefly on the 3D case with
\[
 F=\begin{pmatrix}
  0 & \omega_1 & \omega_2 \\
  -\omega_1 & 0 & \omega_3 \\
  -\omega_2 & -\omega_3 & 0 
 \end{pmatrix},\quad
 B=\begin{pmatrix}
  u_1 & v_1 & v_2 \\
  v_1 & u_2 & v_3 \\
  v_2 & v_3 & -u_1-u_2 
 \end{pmatrix},\quad
 C=\diag(J_1,J_2,J_3). 
\]

In that case the Hamiltonian system is integrable with Liouville tori of dimension 3 and the general solution is expressed in terms of Prym $\theta$-functions \cite{Bogo92}. 
On Fig. 5 we show different real versions of the spectral curves (of genus 4) with 
the following values of $(J_1,J_2,J_3;u_1,u_2;v_1,v_2,v_3;\omega_1,\omega_2,\omega_3)$ (in the order of plots in the figure):
\[
\begin{array}{ll}
 (1,2,3; 8,0; 1,2,1; 0,1,1), & (1,2,3; 1,2,10; 5,2,2),\\
 (1,2,3; 1,2; 1,2,1; 1,2,2), & (1,2,10; 1,2; 1,2,30; 5,2,2),
\end{array}
\]
where the spectral bands with maximal multiplicity 6 are marked in black. We can see maximally $n$-band examples with $n\leq 3$, but whether $n=3$ is maximal still to be shown.

\begin{figure}[!ht]
\centerline{\includegraphics[width=0.55\textwidth]{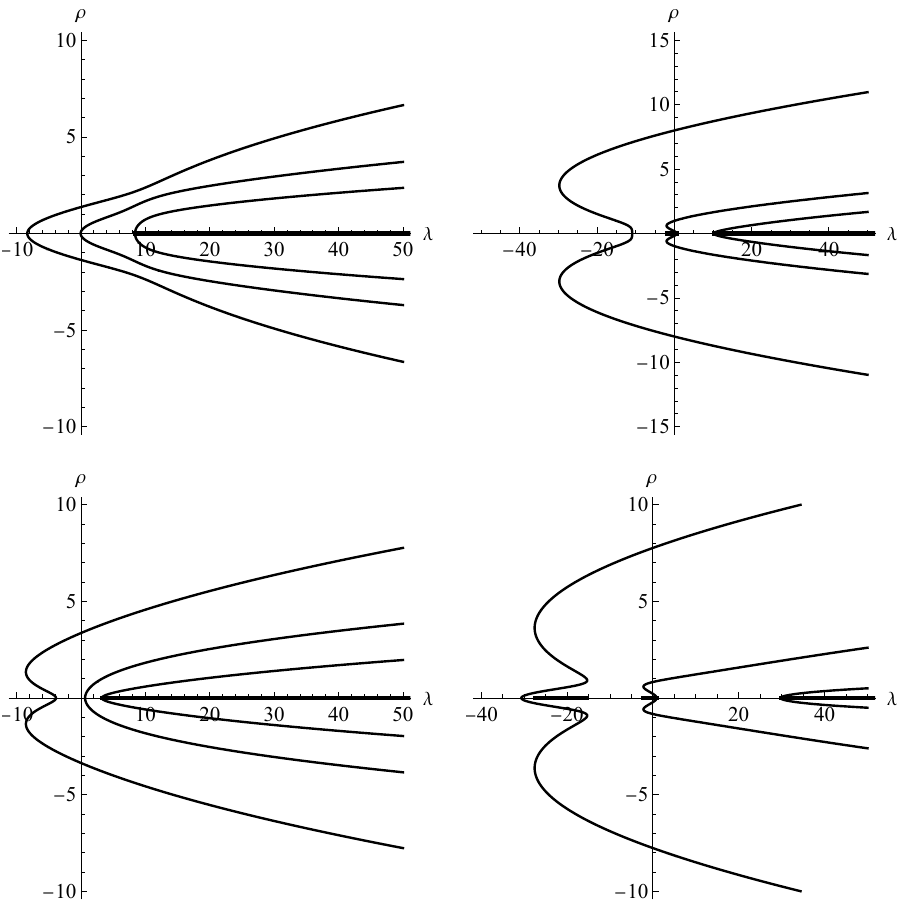}}
\caption{Spectral curves for Bogoyavlenskij top in $3\times3$ case.}\label{fig:spec3dtop}
\end{figure}


\subsection*{Example 2. Self-adjoint reduction (\ref{symf}).}~ 
\medskip

We will consider here only real reduction (\ref{symf}) since the Hermitian reduction (\ref{symherm}) in $2\times 2$ does not lead to the principally new cases.

Let $B=0$ and
\begin{equation}\label{sym2}
 F=\begin{pmatrix} f & g \\ g & h \end{pmatrix},\qquad
 C=\left(\begin{array}{rr}
   \cos\omega_0 & -\sin\omega_0 \\
   \sin\omega_0 & \cos\omega_0
  \end{array}\right),\quad \varkappa=\tan\omega_0.
\end{equation}
Then equation (\ref{fbtop}) is equivalent to the system
\begin{equation}\label{sys3}
 f'=\alpha-\varkappa(f+h)g,\quad h'=\alpha+\varkappa(f+h)g,\quad 2g'=\varkappa(f^2-h^2),
\end{equation}
 which is easy to solve in
elementary functions using the first integrals $I_1=f+h-2\alpha x$
and $I_2=(f-h)^2+4g^2$ (see the explicit formulas below). These simplest examples allow us to demonstrate the main characteristic features of the corresponding operators in both cases $\alpha=0$ and $\alpha\ne0$.\smallskip

\medskip
\noindent{\em Case $\alpha=0$: pseudo-Mathieu matrix potentials.} 

\medskip

We find
\[
 F=k_1I+k_2\CSF,\quad \chi_1=2\varkappa k_1x+\omega_1,
\]
where $k_1,k_2,\omega_1$ are integration constants. The potential 
\begin{equation}\label{e1:u}
 U=F^2-F'=k^2I+2\beta\CS,\quad \chi=2\gamma x+\delta,
\end{equation}
is periodic with
$k^2=k_1^2+k_2^2, \,
 \beta=\frac{k_1k_2}{\cos\omega_0},\,
 \gamma=\varkappa k_1,\, \delta=\omega_1-\omega_0.
$

Without loss of generality, we can consider the family of matrix Schr\"odinger operators with $\pi$-periodic potentials
\begin{equation}\label{A}
 U=A\CSX,
\end{equation}
which remind the Mathieu operators $-D_x^2+A \cos 2x$, but, as we will see, have very different spectral properties.
In particular, we prove the following result.

\begin{prop}
The matrix Schr\"odinger operator with potential  (\ref{A}) is a maximally finite-gap operator with the eigenfunctions explicitly expressed in elementary functions.
\end{prop}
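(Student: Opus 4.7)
The plan is to exploit the fact that the matrix $\CSX$ is diagonalized by a simple rotation. With $\sigma_3 = \diag(1,-1)$ and
\[
 R(x) = \begin{pmatrix} \cos x & -\sin x \\ \sin x & \cos x\end{pmatrix},
\]
one checks directly that $\CSX = R(x)\sigma_3 R(x)^{-1}$. The first step is to use $R(x)$ as a gauge transformation: setting $\psi = R(x)\phi$ in $L\psi = \lambda\psi$ and exploiting the identities $R'(x) = R(x)J$ and $J^2 = -I$ for the constant matrix $J = \begin{pmatrix}0 & -1\\ 1 & 0\end{pmatrix}$, the spectral equation collapses to the constant-coefficient linear system
\[
 \phi'' + 2J\phi' + \bigl((\lambda - 1)I - A\sigma_3\bigr)\phi = 0.
\]

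The next step is to solve this with an exponential ansatz $\phi(x) = e^{ikx}v$. The resulting algebraic system $\bigl((\lambda - 1 - k^2)I + 2ikJ - A\sigma_3\bigr)v = 0$ has determinant $(\lambda - 1 - k^2)^2 - A^2 - 4k^2$, so the dispersion relation reads
\[
 k^4 - 2(\lambda+1)k^2 + (\lambda - 1)^2 - A^2 = 0,
 \qquad\text{i.e.}\qquad
 k^2 = (\lambda + 1) \pm \sqrt{4\lambda + A^2}.
\]
For each real root $k$ the null-vector $v$ is computed explicitly from a $2\times 2$ constant matrix, and the corresponding eigenfunction $\psi(x) = e^{ikx}R(x)v$ is manifestly elementary (a product of exponentials and trigonometric functions).

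To conclude the maximally finite-gap property I would show that for every $\lambda > 1 + |A|$ both values of $k^2$ are strictly positive: this follows at once because in that range $(\lambda+1)^2 > 4\lambda + A^2$, equivalently $(\lambda - 1)^2 > A^2$. Hence for every sufficiently large $\lambda$ the dispersion relation supplies four distinct real exponents $\pm k_1, \pm k_2$, giving four linearly independent bounded solutions. Since the solution space is four-dimensional, every solution at such $\lambda$ is bounded, which is precisely the maximally finite-gap condition from the beginning of Section 6. The full band structure of the spectrum is then read off by counting how many of the two roots $k^2$ are positive as $\lambda$ varies along the real axis.

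The main obstacle is the behaviour at the edges of the spectral bands, where two characteristic exponents collide and the four exponentials no longer span the solution space: at such $\lambda$ one must pick up solutions of the form $(polynomial)\cdot e^{ikx}R(x)v$, which are unbounded. These exceptional values are pinned down by the vanishing of the relevant discriminants ($(\lambda - 1)^2 = A^2$ and $4\lambda + A^2 = 0$) and form a finite set that does not intersect $(1 + |A|, \infty)$, so they do not affect the maximally finite-gap statement and only serve to locate the band endpoints and resonance levels, in complete analogy with Example~1.
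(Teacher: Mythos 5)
Your proposal is correct and follows essentially the same route as the paper: gauge away the $x$-dependence with the rotation $R(x)$, reduce to a constant-coefficient system, derive the biquadratic dispersion relation $k^4-2(\lambda+1)k^2+(\lambda-1)^2-A^2=0$, and check that both roots $k^2$ are positive for $\lambda>1+|A|$, which yields four bounded elementary eigenfunctions $e^{ikx}R(x)v$. The paper's proof is the same computation (with $\rho$ in place of $k$) together with the same sign analysis of the quartic to read off the band multiplicities.
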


\begin{proof}
Indeed, the corresponding eigenfunction can be found explicitly as follows. 

Changing $\psi \to \phi=R \psi, 
R=\left(\begin{array}{rr}
   \cos x & \sin x \\
   -\sin x & \cos x
  \end{array}\right),$ we reduce the Schr\"odinger equation
$$
-\psi_{xx}+A\CSX\psi=\lambda \psi
$$
to the following equation with constant matrix coefficients
\begin{equation}\label{mat}
-\phi_{xx}-2\Omega \phi_x+AJ\phi=(\lambda -1)\phi, \quad J=\begin{pmatrix} 1 & 0 \\ 0 & -1 \end{pmatrix}, \,\,  \Omega=\begin{pmatrix} 0 & -1 \\ 1 & 0 \end{pmatrix}.
\end{equation}

Substituting $\phi=e^{i\rho x}\phi_0$ one obtains the biquadratic characteristic equation
\begin{equation}
\label{chareq}
 \rho^4-2(\lambda+1)\rho^2+(\lambda-1)^2-A^2=0
\end{equation}
and the eigenvector
$$
 \phi_0=\binom{2i\rho}{\lambda-A-1-\rho^2}.
$$
Since a polynomial $\rho^4-a\rho^2+b$ with real coefficients has two real zeroes if and only if $b<0$ and four real zeroes if and only if $a>0$, $b>0$, $a^2>4b$, the spectrum of our Schr\"odinger operator is defined by the following inequalities:
\begin{align*}
 &\operatorname{multiplicity}=2:\quad (\lambda-1)^2-A^2<0, \\
 &\operatorname{multiplicity}=4:\quad \lambda>-1,\quad (\lambda-1)^2-A^2>0,\quad \lambda>-A^2/4.
\end{align*}
We assume $A>0$ without loss of generality. The order of the points $-1$, $-A^2/4$, $1-A$, $1+A$ on the real axis is defined by the sign of $A-2$, as shown schematically on fig.~\ref{fig:rholambda}, where the spectral bands of multiplicity 2 and 4 are marked in grey and black respectively.

There are infinitely many gaps, which are all closed and not shown on the scheme. For example, for the critical amplitude $A=2$ there are two closed gaps in the first spectral band, corresponding to $\rho=1$ and $\rho=2$ with $\lambda=2-2\sqrt{2}$ (periodic) and $\lambda=5-2\sqrt{5}$ (anti-periodic) levels respectively.
\end{proof}

\begin{figure}[!ht]
\centerline{\includegraphics[width=0.5\textwidth]{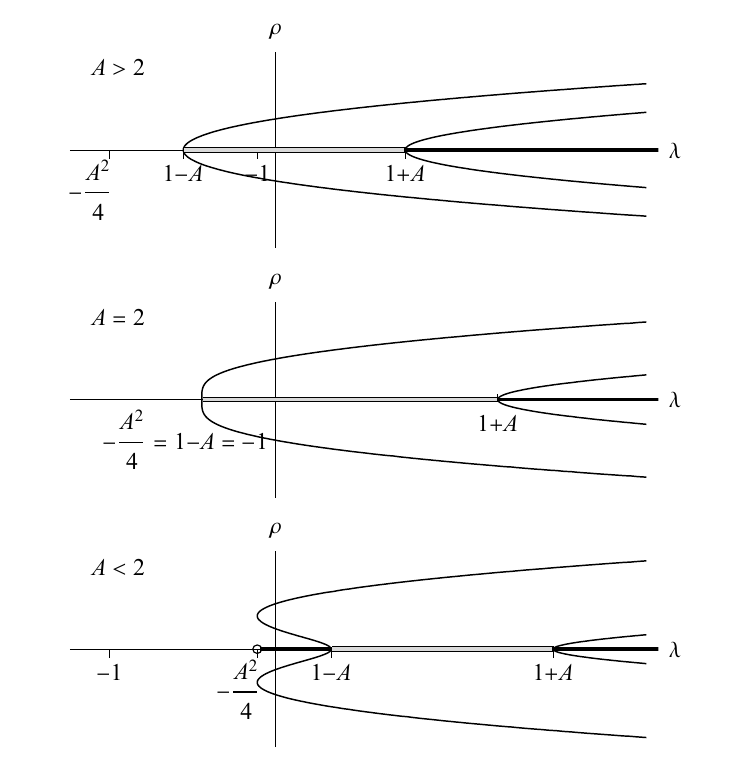}}
\caption{Characteristic curves (\ref{chareq}) and spectral bands}\label{fig:rholambda}
\end{figure}


\begin{remark}
Note that in this case the characteristic curve (\ref{chareq}) determining the spectrum of the Schr\"odinger operator and the spectral curve of the dressing chain given by $$\det(\mu C^2-(CF+FC)-\nu I)=0$$ 
are related by a non-trivial change
$$
\rho=i(a\mu+\sin\omega_0 k_1), \quad \lambda=\nu-a^2\mu^2- b\mu +c, \quad A=2k_2\cos\omega_0
$$
with
$
a=-\sin\omega_0\cos\omega_0, \,\, b=\cos 2\omega_0-2k_1\sin^2\omega_0\cos\omega_0, \,\, c=1-k_1^2\sin^2\omega_0 +2k_1\cos\omega_0.
$
\end{remark}

\medskip
\noindent{\em Case $\alpha\ne0$: exotic matrix harmonic oscillators.} 
\medskip

Assuming, without loss of generality, $f+h=2\alpha x$, $\alpha>0$, we obtain the solution of the system (\ref{sys3}) and the corresponding potential in the form (see fig.~\ref{fig:oscillator_U})
\begin{gather}
\nonumber
 F=\alpha xI+k\CSF,\quad \chi_1 =\gamma x^2+\omega_1,\quad \gamma=\alpha\tan(\omega_0),\\
\label{pot2}
 U=(\alpha^2x^2+k^2-\alpha)I +\frac{2\alpha kx}{\cos\omega_0}\CS,\quad \chi=\gamma x^2+\omega_1-\omega_0.
\end{gather}
When $k=0$ we have the scalar harmonic oscillator $L=-D_x^2+(\alpha^2x^2-\alpha)I$
with the discrete spectrum 
\begin{equation}
\label{disc}
\lambda=2n\alpha, \, n \in \mathbb Z_{\geq 0}
\end{equation}
 with all levels of multiplicity 2.
We will show now that for $k\neq 0$ the spectrum remains the same. 

\begin{prop}
The matrix versions of the harmonic oscillators with potential (\ref{pot2}) have the discrete spectrum (\ref{disc}) with the eigenfunctions (in case of non-zero $k$ and $\gamma$) expressed in terms of the Weber (parabolic cylinder) functions.
\end{prop}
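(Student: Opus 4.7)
The plan is to exploit the factorization $L=\bar AA$ (valid because $B=0$) together with the period-one closure identity $A\bar A=CLC^{-1}+2\alpha I$ in order to build ladder operators for $L$ in direct analogy with the quantum harmonic oscillator.

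First I would define
\[
 a^-=C^{-1}A,\qquad a^+=\bar AC.
\]
Since $F^\top=F$ and $C^\top=C^{-1}$, one has $a^+=(a^-)^*$ on $L^2(\mathbb R,\mathbb C^2)$, and a direct computation using the factorizations gives $a^+a^-=L$ and $a^-a^+=L+2\alpha I$, whence $[L,a^\pm]=\mp 2\alpha\,a^\pm$. Because $L=(a^-)^*a^-\ge 0$ and both eigenvalues of $U(x)$ grow like $\alpha^2 x^2$ at infinity, $L$ has compact resolvent and purely discrete spectrum; the commutator relations then force $\sigma(L)\subset 2\alpha\mathbb Z_{\ge 0}$.

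Next I would determine the ground state. Solutions of $a^-\psi_0=0$ satisfy the first-order system $\psi_0'=-F\psi_0$. Writing $F=\alpha xI+kM$ with $M$ symmetric and $M^2=I$, this implies
\[
 \bigl|\,(\ln|\psi_0|^2)'+2\alpha x\,\bigr|\leq 2|k|,
\]
which integrates to the Gaussian bound $|\psi_0(x)|^2\leq |\psi_0(0)|^2\, e^{-\alpha x^2+2|k||x|}$; every solution of this first-order 2D system therefore lies in $L^2$, so $\dim(\ker a^-\cap L^2)=2$. The isometry identities $\|a^-\psi\|^2=\lambda\|\psi\|^2$ and $\|a^+\psi\|^2=(\lambda+2\alpha)\|\psi\|^2$ on each eigenspace $E_\lambda$ then make $a^\pm$ injective between non-negative eigenspaces, propagating the kernel dimension up the ladder and giving $\dim E_{2n\alpha}=2$ for every $n\geq 0$; this exhausts the spectrum.

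To exhibit the eigenfunctions in Weber form I would rotate $\psi=R(\chi/2)\phi$; using the identity $\tilde\alpha^2-\gamma^2=\alpha^2$ and the covariant derivative $D=\partial_x+\gamma x\Omega$, the Schr\"odinger equation becomes
\[
 \bigl(-D^2+\alpha^2 x^2+k^2-\alpha+2\tilde\alpha k x J\bigr)\phi=\lambda\phi.
\]
Passing to the $\pm i$-eigenbasis of $\Omega$ and applying the complex Gaussian twist $\phi_\pm=e^{\mp i\gamma x^2/2}g_\pm$ decouples the ground-state equation into a scalar ODE that, via $z=\sqrt{2i\gamma}\,x$, becomes the parabolic cylinder equation $g_{zz}+(\nu+\tfrac12-\tfrac{z^2}{4})g=0$; the excited states $(a^+)^n\psi_0$ are then obtained by applying first-order matrix differential polynomials to these Weber functions. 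The main obstacle will be this last step: the algebraic ladder argument pins down the spectrum and multiplicities cleanly, but the explicit decoupling of the coupled $2\times 2$ system into a scalar parabolic cylinder equation requires a careful sequence of substitutions whose coefficients must be tracked through several change-of-variable formulas.
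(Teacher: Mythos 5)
Your proposal is correct and follows the same ladder-operator strategy as the paper: both arguments rest on the factorizations $L=\bar AA$ and $L+2\alpha=C^{-1}A\bar AC$, identify $\bar AC$ as the raising operator, and reduce everything to the two-dimensional ground-state problem $A\psi=0$. The one place where you genuinely diverge is the proof that $\ker A$ lies in $L^2$: you estimate $(\ln|\psi_0|^2)'$ directly from the first-order system $\psi_0'=-F\psi_0$, using $F=\alpha xI+kM$ with $\|M\|=1$ to get $\bigl|(\ln|\psi_0|^2)'+2\alpha x\bigr|\le 2|k|$ and hence a Gaussian bound in one line. The paper instead passes to $\phi=\psi_1+i\psi_2$, substitutes $\phi=e^{-\alpha x^2/2}\varphi$ to obtain $\varphi''-2i\gamma x\varphi'-k^2\varphi=0$, and proves $|\varphi|\le Ce^{|k||x|}$ via a conserved quantity of an auxiliary real system for $r=\tfrac12(|\varphi|^2+k^{-2}|\varphi'|^2)$. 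Your estimate is shorter and more elementary; the paper's detour is, however, exactly what delivers the Weber-function form of the eigenfunctions, since the $\varphi$-equation is a Hermite equation with $\nu=ik^2/(2\gamma)$ after $z=\sqrt{i\gamma}\,x$. You also make explicit, via the isometry identities $\|a^\mp\psi\|^2=\lambda\|\psi\|^2$ and $(\lambda+2\alpha)\|\psi\|^2$, that each level has multiplicity exactly $2$ and that no other eigenvalues occur — a point the paper leaves implicit. Two small repairs: the sign convention $[L,a^\pm]=\mp2\alpha\,a^\pm$ is reversed (one has $[L,a^+]=+2\alpha\,a^+$, as your subsequent use of $a^+$ as a raising operator correctly assumes); and your final decoupling into a parabolic cylinder equation is only sketched, but it is fillable by precisely the paper's substitution — eliminate $\bar\phi=\psi_1-i\psi_2$ from the first-order ground-state system and set $\phi=e^{-\alpha x^2/2}\varphi$ — rather than by manipulating the full second-order matrix Schr\"odinger equation.
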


\begin{proof}

From (\ref{factorization}) we have the operator relations
\[
 L=\bar AA,\quad L+2\alpha=C^{-1}A\bar AC,\quad A=D_x+F,\,\, \bar A=-D_x+F,
\]
which show that $\bar AC$ is the raising operator: let $L\psi=\lambda\psi$ then the function $\tilde\psi=\bar AC\psi$ satisfies equation
\[
 L\tilde\psi=\bar AA\tilde\psi= \bar AA\bar AC^{-1}\psi= \bar AC(L+2\alpha)\psi
 =(\lambda+2\alpha)\tilde\psi.
\]
In order to find the ground state at $\lambda=0$, we solve the equation $A\psi=0$. Let $\psi=\binom{\psi_1}{\psi_2}$, then the functions $\phi=\psi_1+i\psi_2$, $\bar\phi=\psi_1-i\psi_2$ satisfy the system
\[
 \phi'=-\alpha x\phi-ke^{i(\gamma x^2+\omega_1)}\bar\phi,\quad
 \bar\phi'=-\alpha x\bar\phi-ke^{-i(\gamma x^2+\omega_1)}\phi
\]
and the substitution $\phi=e^{-\alpha x^2/2}\varphi$ brings to equation
\begin{equation}\label{phieq}
 \varphi''-2i\gamma x\varphi'-k^2\varphi=0.
\end{equation}
 When $\gamma=0$ we have general solution $\varphi=C_1 e^{kx}+C_2 e^{-kx},$ implying that for some $C>0$
\begin{equation}\label{ineq}
|\varphi(x)| < Ce^{|kx|}.
\end{equation}

We claim that the same estimate works also for $\gamma\neq 0.$ To show this rewrite the complex equation (\ref{phieq}) as the system 
$
w''-2\gamma x r'=0, \quad r''+2\gamma xw-4k^2 r^2=0
$
for two real variables
$r(x):=\frac{1}{2}(|\varphi(x)|^2+k^{-2}|\varphi'(x)|^2)$ and $ w(x):=i(\varphi(x)\bar \varphi'(x)-\varphi'(x)\bar \varphi(x).$
It is easy to check that this system has integral $I=(r')^2+w^2-4k^2r^2,$ which implies that 
$|r'|< 2|k|r+\const,$ and thus the inequality (\ref{ineq}). 

This means that the corresponding $\phi=e^{-\alpha x^2/2}\varphi$ and hence $\psi$ belong to $L_2({\mathbb R}),$ so we have a two-dimensional space of the eigenfunctions of the operator $L$ at the groundstate $\lambda=0$ . Applying the raising operator $\bar AC$ we get the eigenfunctions for $\lambda=2n\alpha$, $n\in \mathbb Z_{\geq 0}$. 

Note that when $\gamma\neq 0$ the equation (\ref{phieq}) can be reduced to the Hermite equation
$y_{zz}-2zy_z+2\nu y=0$ with $\nu=\frac{ik^2}{2\gamma}$ by the scaling $\varphi(x)=y(z)$, $z=\sqrt{i\gamma}x$. 
From the theory of Weber functions (see e.g. Chapter 16 in \cite{WW}) it follows that for non-zero $k$ and $\gamma$ all solutions $\varphi(x)$ of (\ref{phieq}) asymptotically decay as $|x|^{-1/2}$ and thus bounded for real $x$ (see fig.~\ref{fig:oscillator_phi}). This means that the eigenfunctions of the operators (\ref{pot2}) can be expressed in terms of Weber (parabolic cylinder) functions.
\end{proof}

\begin{figure}[!ht]
\centerline{\includegraphics[width=0.65\textwidth]{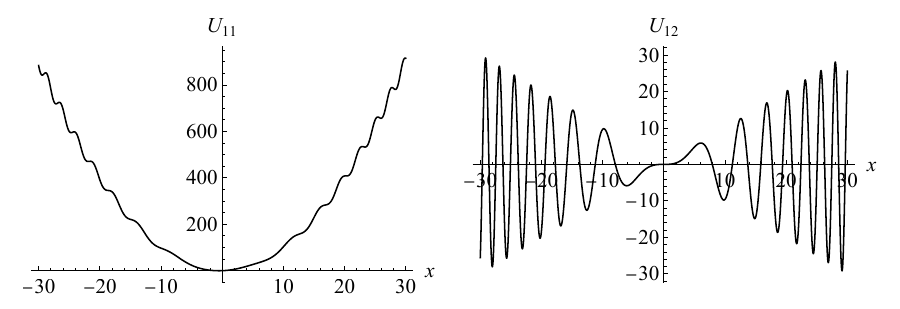}}
\caption{Components $U_{11}$ and $U_{12}$ of the matrix potential (\ref{pot2}), for $\alpha=1$, $k=0.5$, $\gamma=0.05$ and $\omega_1-\omega_0=0$.}\label{fig:oscillator_U}
\end{figure}

\begin{figure}[!ht]
\centerline{\includegraphics[width=0.65\textwidth]{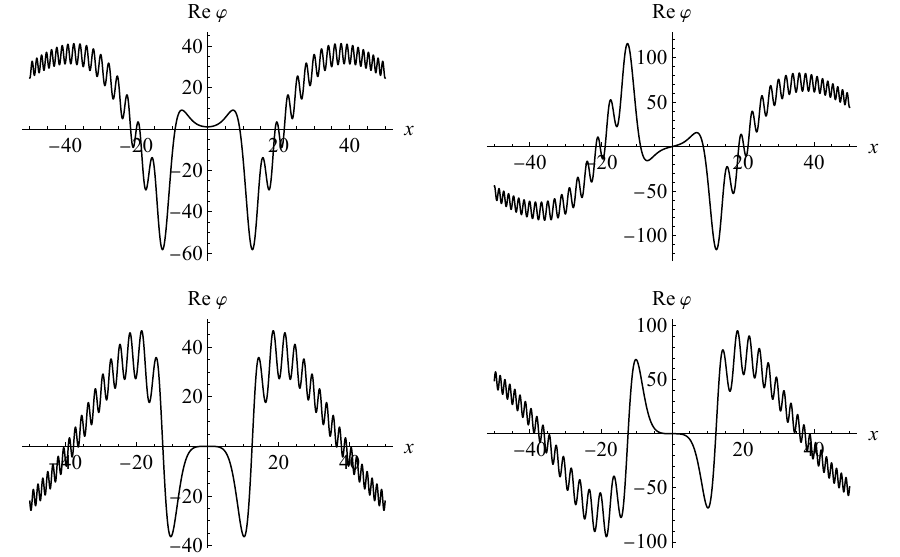}}
\caption{Real parts of solutions of equation (\ref{phieq}), for $k=0.5$ and $\gamma=0.05$, with initial data $(\varphi(0),\varphi'(0))$ equal to $(1,0)$ and $(0,1)$ (top), $(i,0)$ and $(0,i)$ (bottom).}\label{fig:oscillator_phi}
\end{figure}

\subsection*{Example 3. Non-trivial symmetry reduction (\ref{symv}).}~ 

\medskip

Consider now the solutions of (\ref{fbtop}) of the form (\ref{FB}),
where $V=V^\top$ satisfies equation (\ref{vxx})
with $C=C^\top.$
Let
\[
  V=\begin{pmatrix} f & g \\ g & h \end{pmatrix},\quad
  C^{-1}=\diag(1,\gamma),\quad \gamma\ne0,\pm1
\]
then equation (\ref{vxx}) turns into the system
\[
 f''=2(1-\gamma)gg',\quad h''=2(1-\gamma^{-1})gg',\quad
 g''=2\frac{1-\gamma}{1+\gamma}g(h'-f')
\]
which is solved in general in terms of Jacobi elliptic sinus $s=\sn(x,k)$, satisfying 
\[
 (s')^2=(1-s^2)(1-k^2s^2).
\] 
Up to the changes $x\to c_1x+c_2$, $\lambda\to\lambda+c_3$, the potential $U=2V'$ is of the form
\begin{equation}\label{e3:u}
 U=\frac{1}{1-\gamma}\begin{pmatrix}
  -2\gamma k^2s^2 & 2\sqrt{-\gamma}ks' \\
  2\sqrt{-\gamma}ks' & 2k^2s^2-(1+k^2)(1+\gamma)
 \end{pmatrix}.
\end{equation}
The $\psi$-function can be found in quadratures by solving the system $\psi'=-{\mathcal A}\psi$, ${\mathcal L}\psi=0$ with the matrices (\ref{PQfb}), where
\[
 F=CVC^{-1}-V,\quad B=CV'C^{-1}+V'-F^2.
\]
The condition $\det{(\mathcal L +\lambda I)}=0$ defines the spectral curve $R(\lambda,\mu)=0$ and substituting $\psi=\psi_1\dbinom{1}{-({\mathcal L}_{11}+\lambda)/{\mathcal L}_{12}}\Big|_{R(\lambda,\mu)=0}\in\ker{\mathcal L}$ into the equation $\psi_x=-{\mathcal A}\psi$ we find $\psi_1$ from the equation of the form $\psi'_1=-({\mathcal A}_{11}-{\mathcal A}_{12}({\mathcal L}_{11}+\lambda)/{\mathcal L}_{12})\psi_1$. For the potential (\ref{e3:u}), the spectral curve is
\begin{multline*}
 R(\lambda,\mu)= (1-\gamma)^2(\gamma^2\lambda^2+\mu^4)
  +(1+k^2)(1-\gamma^2)\gamma^2(\lambda+\mu^2)\\
  +(1-\gamma)(1-\gamma+\gamma^2-\gamma^3)\lambda\mu^2
  +k^2\gamma^2(1+\gamma)^2=0
\end{multline*}
and the equation for $\psi_1$ reads
\[
 \frac{\psi'_1}{\psi_1}
  = \frac{\gamma\mu s'+k^2\gamma s^3+\frac{1-\gamma}{1+\gamma}(\gamma\lambda-\mu^2)s}
         {\gamma s'+(\gamma-1)\mu s}\bigg|_{R(\lambda,\mu)=0}.
\]
This formula provides the fundamental system of solutions since we have, in general, 4 values of $\mu$ for the given $\lambda$.


In particular, at $k=1$ we obtain, after some transformations, the special soliton-like potential
\[
 U= \frac{1}{(1-\gamma)\cosh^2x}\begin{pmatrix}
  2\gamma & -2\sqrt{\gamma}\sinh x \\
  -2\sqrt{\gamma}\sinh x & (1+\gamma)\cosh^2x-2
  \end{pmatrix}
\]
with the reducible spectral curve
\[
  (\mu^2+\lambda)\Bigl(\mu^2+\gamma^2\Bigl(\lambda-\frac{1+\gamma}{1-\gamma}\Bigr)\Bigr)=0
\]
and $\psi$-functions
\begin{gather*}
 \psi=e^{\nu x}
 \begin{pmatrix}
   (\gamma-1)\nu-\gamma\tanh x\\
   -\sqrt{\gamma}\sech x
 \end{pmatrix},\quad \nu=\sqrt{-\lambda}, \\
 \psi=e^{\nu x}
 \begin{pmatrix} \sech x \\
  \dfrac{\gamma^2+(1-\gamma)^2\lambda-\sech^2x}
   {\sqrt{\gamma}((\gamma-1)\nu-\tanh x)}
 \end{pmatrix},\quad \nu=\sqrt{\frac{1+\gamma}{1-\gamma}-\lambda}.
\end{gather*}

We leave more detailed discussion of the spectral properties of the corresponding matrix Schr\"odinger operators for another occasion.

\section{Matrix KdV hierarchy and Novikov equation}

As in the scalar case, Darboux transformation can be prolonged to the B\"acklund transformation for the matrix KdV equation. The isospectral symmetries for Schr\"odinger equation (\ref{psixx}) are of the form
$
 \psi_t=P\psi+Q\psi_x
$
where $P$ and $Q$ satisfy relations
\[
 2P_x+Q_{xx}=[U,Q],\quad U_t=P_{xx}+QU_x+2Q_xU+[P,U]-2\lambda Q_x.
\]
This leads to the hierarchy
$
 U_{t_k}=-2Q_{k,x},
$
where $Q_k$ satisfy recurrent relation 
\[
 Q_0=\const,\quad -4Q_{k+1,x}=Q_{k,xxx}-2\{U,Q_{k,x}\}-\{U_x,Q_k\}+\ad_UD^{-1}_x\ad_U(Q_k),
\]
where we denote $\ad_U(V)=[U,V], \, \ad^+_U(V)=\{U,V\}=UV+VU.$
In terms of the potential $V$ defined by $U=2V_x$, this can be rewritten as
\[
 -4Q_{k+1}=R(Q_k),\quad R=D^2_x-4\ad^+_{V_x}+2D^{-1}_x\ad^+_{V_x} +4(D^{-1}_x\ad^+_{V_x})^2.
\]
 Applying this recursion operator to the identity matrix $Q_0=I$ and neglecting the integration constants, one arrives at the matrix potential-KdV hierarchy, which is obtained from the
scalar one by suitable ordering of variables in monomials:
\begin{gather}
\nonumber
 V_{t_0}=I,\qquad V_{t_1}=V_x,\qquad V_{t_2}=V_{xxx}-6V^2_x,\\
\label{vt}
 V_{t_3}=V_{xxxxx}-10\{V_x,V_{xxx}\}-10V^2_{xx}+40V^3_x,\ \dots
\end{gather}
In the scalar case any higher symmetry can be represented as a linear combination of these ones, but in the matrix case the effect of matrix integration constants leads to much larger hierarchy, which contains also nonlocal flows:
\begin{gather*}
 V_t=C,\qquad V_t=[V,C], \\
 V_t=\{V_x,C\}-[V,[V,C]]+[W,C],\quad W_x=[V,V_x],\dots
\end{gather*}
This hierarchy is not commutative, e.g. the flows $V_t=[V,C]$ and $V_\tau=[V,K]$ commute if and only if $[K,C]=0$.

The solutions of the matrix dressing chain (\ref{ALLA}) under the periodic boundary condition $U_{n+N}=CU_nC^{-1}$:
$
 A_nL_n=L_{n+1}A_n,\quad L_{n+N}=CL_nC^{-1},
 $
 satisfy the following matrix version of Novikov equations \cite{Nov}
 \begin{equation}\label{Nov}
 [A,L_n]=0,\quad A=C^{-1}A_{n+N-1}\cdots A_{n+1}A_n,
\end{equation}
which are the stationary flows of this extended KdV hierarchy.
In particular, the 1-periodic closure (\ref{fbtop}) with $\alpha=0$ can be rewritten equivalently as the Novikov equation
 \begin{equation}\label{Nov2}
 [L,A]=0,\quad A=C^{-1}(D_x+F).
\end{equation} for the corresponding Schr\"odinger operator 
$L=-D_x^2+U, \,\, U=F^2-F+B$.

However, the problem of identifying stationary solutions of KdV hierarchy and the solutions of the periodic dressing chains is not trivial even in the scalar case \cite{VS}. In the matrix situation, we have seen that the whole hierarchy is much more larger, hence the question arise, how do both languages correspond.

\section{Concluding remarks.}

Jurgen Moser was the first, who linked the integrable systems of classical mechanics (Neumann and Jacobi systems) with spectral theory and finite-gap potentials \cite{Moser, Moser2}. Our results show that classical mechanics continues to provide interesting examples for the spectral theory.

The commuting pairs of ordinary differential operators with matrix coefficients were studied by Krichever \cite{Krichever} and Grinevich \cite{Grinevich}. Dubrovin \cite{Dubrovin2} studied the algebro-geometric aspects of the matrix finite-gap case in more detail, including the related reality problems.
However, in their approach the highest coefficient is usually assumed to have different eigenvalues, so our case of Schr\"odinger operators with this coefficient being identity is very degenerate. 

A systematic study of such Schr\"odinger operators in periodic case had been started by Gesztezy and Sakhnovich in \cite{GS}, which can be served also as a review of the previous relevant results. As they have written, ``While basic aspects of inverse spectral theory for matrix-valued Schr\"odinger operators were established some time ago, finer properties such as isospectral sets (manifolds) of potentials, for instance, in the periodic or algebro-geometric finite-band cases, are still in their infancy." 

As we have seen, periodic matrix dressing chain is a source of much needed interesting explicitly solvable cases, which deserve further investigation.

\section*{Acknowledgements}
One of us (APV) had a privilege to know Valery Vasilievich Kozlov first in 1975 as an excellent tutor in Theoretical Mechanics at Moscow State University  and then as a one of the most influential experts in modern aspects of classical mechanics.
Numerous stimulating discussions with him over many years are most gratefully acknowledged.

We would also like to thank Jean-Claude Cuenin for useful discussions of the spectral aspects of the paper.



\begin{thebibliography}{99}
\setlength{\itemsep}{0mm}

\bibitem{Adler94}
V.E. Adler {\it Nonlinear chains and Painlev\'e equations}.  Physica D {\bf 73} (1994), 335-351.

\bibitem{A}
V.I. Arnold {\it Mathematical Methods of Classical Mechanics.}  Graduate Texts in Mathematics, 60. Springer-Verlag, New York-Heidelberg, 1978.


\bibitem{BBK}
A. Badanin, J. Br\"uning, E. Korotyaev {\it The Lyapunov function for Schrödinger operators with a periodic 2 × 2 matrix potential.} J. Funct. Analysis {\bf 234} (2006), 106–126.


\bibitem{Bel}
V.V. Beletskii {\it Motion of an artificial satellite about its centre of mass.} Nauka, Moscow 1965 (in Russian).
English translation: Israel Programme for Scientific Translations, Jerusalem 1966.

 \bibitem{BBIMT}
A.M. Bloch, A.M., V. Brînzănescu, A. Iserles, J.E. Marsden, T. Ratiu {\it A class of integrable flows on the space of symmetric matrices.} Commun. Math. Phys. {\bf 290} (2009), 399-435.

\bibitem{Bogo84} 
O.I. Bogoyavlenskij {\it Two integrable cases of the dynamics of a rigid body in a force field.} Dokl. Akad. Nauk SSSR {\bf 275} (1984), 1359--1363. 

\bibitem{Bogo84_2} 
O.I. Bogoyavlenskij {\it Integrable Euler equations on Lie algebras arising in problems of
mathematical physics.} Izv. Akad. Nauk SSSR Ser. Mat. {\bf 48} (1984), 883-938.


\bibitem{Bogo92} 
O.I. Bogoyavlenskij {\it Euler equations on finite-dimensional Lie coalgebras, arising in problems of mathematical physics.}
 Russ. Math. Surv. {\bf 47} (1992), 117--189.


 
\bibitem{Brun}
F. de Brun {\it Rotation around a fixed point.} Ofversigt Kongl. Vetenskaps-Akad.
Forhandlingar {\bf 50} (1893), 455--468. II, III, Ark. Mat. Astr. Fys. {\bf 4:5} (1907),
1--10; {\bf 6:4} (1909).

\bibitem{CD} F. Calogero, A. Degasperis.
Nonlinear evolution equations solvable by the inverse spectral transform.
{\em Lett. Nuovo Cim. \bfseries 39B:1} (1977) 1--53.

\bibitem{ChK}
D Chelkak, E Korotyaev {\it Spectral estimates for Schrodinger operators with periodic matrix potentials on the real line.}
International Mathematics Research Notices 2006 (9), 60314-60314.

\bibitem{Clebsch}
A. Clebsch {\it \"Uber die Bewegung eines Korpers in einen Flussigkeit.} Math. Ann. 3
(1871), 238--262.

\bibitem{Dubrovin}
B.A. Dubrovin {\it Completely integrable Hamiltonian systems that are associated with matrix
operators, and Abelian varieties.} Funct. Anal. Appl. {\bf 11} (1977), 265-277.


\bibitem{Dubrovin2}
B.A. Dubrovin {\it Matrix finite-gap operators.} J. Soviet Math. {\bf 28} (1983), 20-50.


\bibitem{EGR} 
P. Etingof, I. Gelfand, V. Retakh {\it Factorization of differential
 operators, quasideterminants, and nonabelian Toda field equation.}
Mathematical Research Letters {\bf 4} (1997), 413--425.

\bibitem{GS}
F. Gesztesy, L.A. Sakhnovich {\it A class of matrix-valued Schrödinger operators with prescribed finite-band spectra.} arXiv/0112132 (2001). In: Alpay, D. (eds) Reproducing Kernel Spaces and Applications. Operator Theory: Advances and Applications, vol 143. Birkh\"auser, Basel, 2003.

\bibitem{GV} V.M. Goncharenko,  A.P. Veselov.
 Monodromy of the matrix Schr\"odinger equations and Darboux
 transformations. {\em J. Phys. A: Math. Gen. \bfseries 31} (1998) 5315--5326.

\bibitem{Gon}
V.M. Goncharenko {\it Multisoliton solutions of the matrix KdV equation.}
Theor. Math. Phys. {\bf 126:1}(2001), 81-91.

\bibitem{Grinevich}
P.G. Grinevich {\it Commuting matrix differential operators of arbitrary rank.} Dokl. Akad. Nauk SSSR {\bf 278} (1984), no. 5, 1048–1052.

\bibitem{Ince}
 E.L. Ince {\it Ordinary Differential Equations.} Dover, New York, 1947.
 
\bibitem{IH}
 L. Infeld and T. E. Hull {\it The factorization method.} Rev. Mod. Phys. {\bf 23} (1951), 21--68.

\bibitem{Jacobi}
 C.G.J. Jacobi {\it Sur la rotation d'un corps.} In: Gesammelte Werke, 2. Band (1881), 291--352.
 
\bibitem{Kov}
 S. Kowalewski  {\it Sur le probleme de la rotation d'un corps solide autour d'un point fixe.}  
 Acta Math. {\bf 12} (1889), 177--232.

\bibitem{Kozlov1}
V.V. Kozlov {\it Two integrable problems of classical dynamics.}  
Moscow Univ. Mech. Bull. {\bf 36:3-4} (1981), 68--71.

 \bibitem{Kozlov1983}
V.V. Kozlov {\it Integrability and non-integrability in Hamiltonian mechanics.} Russian Math. Surveys {\bf 38:1} (1983), 1-76.


\bibitem{Krein}
M. Krein {\it The basic propositions of the theory of $\lambda$-zones of stability of a canonical system of linear differential equations with periodic coefficients.} In Memory of A.A.~Andronov, Izdat. Akad. Nauk SSSR, Moscow, 1955, pp. 413--498.

\bibitem{Krichever}
I.M. Krichever {\it Algebraic curves and commuting matrix differential operators.} Funct. Anal. Appl. {\bf 10:2} (1976), 144–146. 

\bibitem{Kuchment}
P. Kuchment {\it An overview of periodic elliptic operators.} 
Bull. AMS {\bf 53:3} (2016), 343–414.


\bibitem{Lax}  P.D. Lax. Integrals of nonlinear equations of evolution and solitary waves. {\em Comm. Pure Appl. Math. \bfseries 21} (1968) 467--490.

\bibitem{Lyapunov}
A. Lyapunov {\it The General Problem of Stability of Motion.} 2nd ed., Gl. Red. Obschetekh. Lit., Leningrad, 1935;
reprinted in Ann. of Math. Stud., vol. 17, Princeton Univ. Press, NJ, 1947.

 
\bibitem{Manakov} 
 S.V. Manakov {\it Note on the integration of Euler's equations
 of the dynamics of an $n$-dimensional rigid body.}
 Funct. Anal. Appl. {\bf 10} (1976), 328--329.

 \bibitem{AO} L. Martinez Alonso, E. Olmedilla.
Trace identities in the inverse scattering transform method associated with
matrix Schr\"odinger operators. {\em J. Math. Phys. \bfseries 23:11} (1982), 2116--2121.

\bibitem{Moser} 
J. Moser {\it Various aspects of integrable Hamiltonian systems.} Dynamical systems (C.I.M.E. Summer School, Bressanone, 1978), pp. 233--289, Progr. Math. {\bf 8}, Birkh\"auser, Boston, 1980. 

\bibitem{Moser2}
J. Moser {\it Integrable Hamiltonian systems and spectral theory.} Lezioni Fermiane, Pisa, 1981.
 
\bibitem{Nov}
 S.P. Novikov {\it A periodic problem for the Korteweg--de Vries equation. I.}
 Functional Anal. Appl. {\bf 8} (1974), 236--246.
 
\bibitem{NS}
S.P. Novikov and I. Shmel'tser {\it Periodic solutions of Kirchhoff equations of the free
motion of a rigid body in a fluid and the extended Lyusternik--Shnirel'man--Morse
theory. I.}  Funct. Anal. Appl. 15 (1981), 197--207.

\bibitem{O} E. Olmedilla.
Inverse scattering transform for general matrix Schr\"odinger
operators and the related symplectic structure.
 {\em Inverse Problems \bfseries 1} (1985) 219--236.

\bibitem{Perelomov}
A.M. Perelomov {\it A few remarks about integrability of the equations of motion of a rigid body in ideal fluid.}
 Physics Letters A {\bf 80: 2--3} (1980), 156--158.

\bibitem{Reiman} A.G. Reyman {\it Integrable Hamiltonian systems connected with graded Lie algebras.} Zap. nauch. sem. LOMI {\bf 95} (1980), 3-54 (in Russian). English translation: J. Sov. Math. {\bf 19} (1982) 1507--1545.

\bibitem{RSTS} A.G. Reyman and M.A. Semenov-Tian-Shansky {\it Integrable Systems. Group-theoretical Approach.} The Computer Research Institute Publishing, Moscow-Izhevsk, 2003. (in Russian)


\bibitem{Shabat} 
A.B. Shabat {\it The infinite dimensional dressing dynamical system.} Inverse Prob. {\bf 8}
(1992), 303--308.


\bibitem{Suzko_2005} 
 A.A. Suzko {\it Intertwining technique for the matrix Schr\"odinger equation.} 
 Phys. Lett. A {\bf 335:2--3} (2005) 88--102.

\bibitem{VS} 
 A.P. Veselov, A.B. Shabat
 {\it Dressing chain and spectral theory of Schr\"odinger operators.}
 Funct. Anal. Appl. {\bf 27:2} (1993) 81--96.

 \bibitem{V2}
A.P. Veselov {\it Two remarks about the connection of Jacobi and Neumann integrable systems.} Math. Zeitschrift {\bf 216} (1994), 337-345.

 \bibitem{WK} M. Wadati, T. Kamijo. On the extension of inverse scattering method.{\em Prog. Theor. Phys. \bfseries 52} (1974) 397--414.
 
\bibitem{WW} 
 E.T. Whittaker, G.N. Watson {\it A Course of Modern Analysis.} Cambridge Univ. Press, 1996.


\end{thebibliography}
\end{document}